\renewcommand{\emptyset}{\varnothing}
\renewcommand{\phi}{\varphi}
\newcommand{\dfn}{:=}
\renewcommand{\epsilon}{\varepsilon}
\newcommand{\red}[1]{{\color{red}#1}}
\newcommand{\blue}[1]{{\color{blue}#1}}
\newcommand{\orange}[1]{{\color{orange}#1}}
\definecolor{mygreen}{rgb}{0, 0.5, 0}
\definecolor{airforceblue}{rgb}{0.36, 0.54, 0.66}
\newcommand{\todoinline}[1]{\todo[caption={}, inline]{#1}}
\newcommand{\lukasedit}[2]{\color{red}{#2}\color{black}}
\newcommand{\Bool}{\{0,1\}}
\newcommand{\Nat}{\mathbb N}
\newcommand{\Pij}[2]{\Pi^{#1}_{#2}}
\newcommand{\Sij}[2]{\Sigma^{#1}_{#2}}
\newcommand{\Dij}[2]{\Delta^{#1}_{#2}}
\newcommand{\lfp}[1]{\mu#1}
\newcommand{\gfp}[1]{\nu#1}
\renewcommand{\succ}{\mathsf s}
\newcommand{\pow}[1]{\mathcal P (#1)}
\newcommand{\formop}[3]{ #1_{ #3, #2, \MakeUppercase{#2}} }
\newcommand{\code}[1]{\ulcorner #1 \urcorner}
\newcommand{\sat}[1]{\mathrm{Sat}_{#1}}
\newcommand{\liff}{\leftrightarrow}
\newcommand{\ntype}{\mathbf{N}}
\newcommand{\ltype}{\mathbf{L}}
\newcommand{\ttype}{\mathbf{T}}
\newcommand{\ID}{\mathsf{ID}}
\newcommand{\Lang}{\mathcal L}
\newcommand{\lang}[1]{\Lang_{#1}}
\newcommand{\langarith}{\lang 1}
\newcommand{\langsoarith}{\lang 2}
\newcommand{\langmuarith}{\lang \mu}
\newcommand{\langmuarithord}{\langmuarith^{\mathrm{ann}}}
\newcommand{\model}[1]{\mathfrak{#1}}
\newcommand{\interp}[2]{#2^{#1}}
\renewcommand{\models}{\vDash}
\newcommand{\setvarint}[1]{\wp{(\vert{#1}\vert )}}
\newcommand{\N}{\model {N}}
\newcommand{\seqcon}{ \mathcal{O}}
\newcommand{\infrule}{\mathsf r}
\newcommand{\supp}[1]{T_{#1}}
\newcommand{\id}{\mathsf{id}}
\newcommand{\wk}{\mathsf w}
\newcommand{\exch}{\mathsf{e}}
\newcommand{\cut}{\mathsf{cut}}
\renewcommand{\lor}{\vee}
\renewcommand{\land}{\wedge}
\newcommand{\limp}{\to}
\newcommand{\der}{\mathcal{D}}
\newcommand{\vldr}[2]{\vltr{#1}{#2}{\vlhy{\ \ \ }}{\vlhy{\ \ \ }}{\vlhy{\ \ \ }}}
\newcommand{\branch}{\mathfrak{b}}
\newcommand{\rs}[1]{RS(#1)}
\newcommand{\pr}{\mathsf{pr}}
\newcommand{\constr}{\mathcal{O}}
\newcommand{\dect}{\triangleleft}
\newcommand{\rp}{\mathsf{rp}}
\newcommand{\rpf}[1]{{#1}^\mathsf{rp}}
\newcommand{\ind}{\mathsf{ind}}
\newcommand{\xind}[1]{\ind(#1)}
\newcommand{\preaxiom}{\mathrm{pre}}
\newcommand{\indaxiom}{\mathrm{ind}}
\newcommand{\proves}{\vdash}
\newcommand{\IDL}[1]{\ID}
\newcommand{\CA}{\mathsf{CA}_0}
\newcommand{\ACA}{\mathsf A {\CA}}
\newcommand{\Lsys}{\mathsf L}
\newcommand{\Klass}{\mathsf K}
\newcommand{\LK}{\Lsys\Klass}
\newcommand{\LKeq}{\LK_=}
\newcommand{\PA}{\mathsf{PA}}
\newcommand{\muPA}{\mu\PA}
\newcommand{\Pin}[2]{\Pi^{#1}_{#2}}
\newcommand{\ca}[1]{#1\text{-}\CA}
\newcommand{\PSCA}{\ca{\Pin 1 2 }}
\newcommand{\WO}{\mathrm{WO}}
\newcommand{\circular}{\mathsf C}
\newcommand{\cmuPA}{\circular\muPA}
\newcommand{\muPAord}{\muPA_{\mathrm{ann}}}
\newcommand{\cmuPAord}{\cmuPA_{\mathrm{ann}}}
\title[Cyclic proof theory of generalised inductive definitions]{Cyclic proof theory of generalised inductive definitions}
\author[G.Curzi]{Gianluca Curzi\lmcsorcid{https://orcid.org/0000-0001-8746-1704}}
\address{University of Gothenburg}
\email{gianluca.curzi@gu.se}
\author[L. Melgaard]{Lukas Melgaard\lmcsorcid{https://orcid.org/0009-0008-0099-6932}}
\address{University of Birmingham}
\email{lksmlgrd@gmail.com}
\begin{document}

\begin{abstract}
We study cyclic proof systems for  $\muPA$, an extension of Peano arithmetic by generalised inductive definitions that is arithmetically equivalent to the (impredicative) subsystem of second-order arithmetic  $\PSCA$  by M\"{o}llerfeld.

The main result of this paper is that cyclic and inductive $\muPA$ have the same proof-theoretic strength.  First, we translate cyclic proofs into an annotated variant based on Sprenger and Dam's systems for first-order $\mu$-calculus, whose stronger validity condition allows for a simpler proof of soundness. We then  formalise this  argument  
within $\PSCA$, leveraging  M\"{o}llerfeld's  conservativity properties. To this end, we build on prior work by {Curzi} and Das on the reverse mathematics of the Knaster-Tarski theorem.

As a byproduct of our proof methods we show that, despite the stronger validity condition, annotated and ``plain'' cyclic proofs for $\muPA$ prove the same theorems. 

This work represents a further step in the non-wellfounded proof-theoretic analysis of  theories of arithmetic via impredicative fragments of second-order arithmetic, an approach initiated by Simpson's Cyclic Arithmetic, and continued by Das and {Melgaard} in the context of arithmetical inductive definitions. 
\end{abstract}

\maketitle

\section{Introduction}

Non-wellfounded proof theory explores generalized notions of proof that allow for infinite branches, where logical soundness is guaranteed by global specifications such as so-called  \emph{progressivity} (or \emph{trace}) condition. A central concept in this area is that of \emph{cyclic} (or \emph{circular}) proofs — non-wellfounded proofs that exhibit a regular tree structure often represented as finite   graphs. Compared to more traditional notions of proofs, cyclic proofs are particularly well-suited for formalizing (co)inductive reasoning, and they naturally subsume it. In particular, they provide a more ``analytic'' approach to (co)inductive invariants, which is crucial for the mechanisation of proof search~\cite{imp1,imp2,imp3,imp4}. 

The modern development of cyclic proofs can be traced back to the work of Niwiński and Walukiewicz \cite{NW96:games-for-mu}, who anticipated them in the context of tableau methods for the modal $\mu$-calculus. The topic was subsequently advanced by Simpson and Brotherston, who initiated the systematic investigation of cyclic proof theory~\cite{Brotherston05:cyc-prf,BrotherstonS11}.

Progressivity  condition is the most common validity criterion for cyclic proofs studied in the literature. It represents a quite robust and logic-independent requirement. However, because of its non-local (and non-modular) nature, checking progressivity  is in general  PSPACE-complete for cyclic proof systems (see, e.g.,~\cite{AfshariW24}).  More ``efficient'' alternatives to validity have been explored in~\cite{AfshariLeigh17:cut-free-mu,4,5,15}  with so-called \emph{(cyclic) reset proofs},  in which sequents can be annotated  to  keep track of ``progress''  in a local way. In particular, Sprenger and Dam introduced in~\cite{sprengerdam03:conf,sprengerdam03:journal} a reset system for first-order $\mu$-calculus where cyclic proofs are decorated with variables ranging over ordinals and equipped with  order-theoretic constraints on those ordinals. Their work was later refined by Leigh et al. in~\cite{Graham-cyclic-first-order-mu-calculus}.  The analysis of the trade-off between progressing and reset conditions has been recently thoroughly conducted in a general, abstract setting by Afshari and Wehr through the connection between Safra automata and reset proofs~\cite{DLeighW24}.

One of the primary goals of cyclic proof theory is to study the expressivity of cyclic proofs — for instance, with respect to provability, logical complexity, or proof-theoretic strength — and to compare them with standard, inductive(ly defined) proofs. Because of the non-wellfounded nature of cyclic proofs, this analysis  often requires ``indirect'' arguments, typically involving formalisation of mathematical principles in first-order or second-order arithmetical theories. In the context of first-order logic with inductive definitions, for example, Berardi and Tatsuta showed that inductive and cyclic proof systems do \emph{not} prove the same set of theorems~\cite{BerTat17:non-equivalence}, and that the equivalence between the two systems can be obtained when these are extended with Peano Arithmetic ($\mathsf{PA}$)~\cite{BerTat17:equivalence}.. The latter statement was  proven by  extracting an induction principle from the progressivity condition using (an extension of) Podelski-Rybalchenko termination theorem, and requires formalising  infinite Ramsey Theorem within $\mathsf{PA}$.

Berardi and Tatsuta's result was implicitly proven by Simpson, who independently showed that the cyclic variant of Peano Arithmetic ($\mathsf{CA}$) is equivalent to $\mathsf{PA}$~\cite{Simpson17:cyc-arith}. To establish this, he rather formalises the meta-theory of cyclic proofs in subsystems of second-order arithmetic using some results from reverse mathematics~\cite{Simpson99:monograph}.

Simpson's work was later refined by Das in two directions. First, he   showed in~\cite{Das20:ca-log-comp} that the logical complexity of inductive invariants  in $\mathsf{CA}$ is strictly simpler than $\mathsf{PA}$. Then, in joint work with {Melgaard}~\cite{Melgaard}, he studied the  extension of $\mathsf{PA}$ with arithmetic (finitely iterated) inductive definitions ($\mathsf{ID}_{<\omega}$) and showed that this system and its cyclic version $\mathsf{CID}_{<\omega}$ prove the same arithmetic statements, and so have the same proof-theoretic strength. 

    This paper continues the line of research initiated by Simpson, by exploring the cyclic proof theory of generalised inductive definitions. Its main focus  is $\muPA$, an  extension of $\mathsf{PA}$ by least and greatest fixed points of formulas (see~\cite{Curzi023}). Proof-theoretically, $\muPA$   was first studied by M\"{o}llerfeld~\cite{moellerfeld02:gen-ind-dfns}, inspired by Lubarsky’s work on ``$\mu$-definable sets''~\cite{Lubarsky1993definableSO}. The theory was later used by Curzi and Das to characterise the computational expressivity of  constructive fixed point logics~\cite{Curzi023}. Because of the presence of interleaving (or parametric) fixed points, the expressivity of $\muPA$ far exceeds that of $\mathsf{ID}_{<\omega}$. Indeed, M\"{o}llerfeld proved that $\muPA$ has the same proof-theoretic strength as  $\PSCA$, an (impredicative) subsystem of second-order arithmetic where the comprehension scheme is restricted to $\Pi^1_2$-formulas, one of the strongest theories for which we have an ordinal analysis~\cite{Rathjen95:recent-advances}. 

    The main result of this paper can be stated as follows:
    \begin{thm}\label{thm:main-result}
        Cyclic $\muPA$ ($\cmuPA$) is arithmetically equivalent to $\muPA$. 
    \end{thm}
     Our argument is summarized in~\autoref{fig:grand-tour-diagram}.  We first introduce a reset system for $\muPA$ inspired by~\cite{Graham-cyclic-first-order-mu-calculus}, called $\cmuPAord$. Compared to $\cmuPA$, $\cmuPAord$ is defined by a stronger  validity condition that allows for a more local and direct soundness proof. Despite being apparently more permissive, cyclic progressing proofs can be duly turned into cyclic reset proofs (\autoref{thm:translation}). We then adopt Simpson's method and we formalise the soundness argument for reset proofs within $\PSCA$,  using some results from reverse mathematics from~\cite{Curzi023} as well as reflection principles (\autoref{thm:formalisation-of-soundness} and \autoref{reflection}). 
     We then conclude by appealing to M\"{o}llerfeld's arithmetic conservativity of $\PSCA$ over $\muPA$~\cite{moellerfeld02:gen-ind-dfns}.  As a consequence of our proof methods, we also establish the equivalence between  progressing-based cyclic proofs and cyclic reset proofs, thus bridging the gap between two major routes to cyclic proof theory in the context of first-order $\mu$-calculi.

 \subsection*{Outline of the paper} This paper is structured as follows. In~\autoref{sec:2} we recall the syntax and the semantics of $\muPA$, as well as  its sequent calculus system. In~\autoref{sec 3} we define the cyclic proof system $\cmuPA$ and show that the latter can simulate $\muPA$ (\autoref{id-to-cid}). \autoref{sec 4}   introduces the  reset system $\cmuPAord$  and its soundness theorem (\autoref{thm:Graham-soundness}), while~\autoref{sec 5}  shows that cyclic progressing proofs of $\cmuPA$ can be translated into cyclic reset proofs (\autoref{thm:translation}). Finally, \autoref{sec 6} formalises the soundness argument for $\cmuPAord$ of~\autoref{sec 4} within $\PSCA$ (\autoref{thm:formalisation-of-soundness}).


\begin{figure}[t]
    \centering
\small\[
  \begin{tikzcd}[row sep=large]
      \muPA\vdash ``\phi \textrm{ true}"\arrow[d, swap,  "\mathrm{Thm.}~\ref{reflection}"]   & & \PSCA \vdash ``\phi \textrm{ true} " \arrow[ll, swap,  "\text{\cite{moellerfeld02:gen-ind-dfns}}" ]
      \\
         \muPA\vdash \phi  \arrow[r, swap,  "\mathrm{Thm.}~\ref{id-to-cid}"] &\cmuPA \vdash \phi   \arrow[r, swap, "\mathrm{Thm.}~\ref{thm:translation}"]& \cmuPAord \vdash \emptyset:\phi \arrow[u, swap, "\mathrm{Thm.}~\ref{thm:formalisation-of-soundness}"]
      \end{tikzcd} 
      \]
    \caption{Diagram illustrating the main results.}
    \label{fig:grand-tour-diagram}
\end{figure}

\subsection*{Preliminary definitions}
We assume that the reader is familiar with the syntax of the sequent calculus (e.g.~\cite{Basic-PT}) as well as with basic notions of graph theory. The following two definitions summarise the main definitions and notational conventions adopted in this paper.

\begin{defi}[Lists and trees]\label{defn:lists-and-trees}

A \emph{finite list over (a set) $X$} is a function $u:\{i \in \omega \mid i <n\}\to X$, often written $u_0 \cdots u_{n-1}$, where $n$ is called the \emph{length} of $u$. The set of lists over $X$ with length at most $n$ is denoted $X^{<n}$, and with $X^{<\omega}$ we denote the set of (finite) lists over $X$. With $\varepsilon$ we denote the list of length $0$. Given $u=u_0\cdots u_{n-1} \in X^{<\omega}$ and $a\in X$ we denote with $u::a$ the list $u_0\cdots u_{n-1}a$. 

A \emph{(possibly infinite, binary) tree} is a non-empty set $T \subseteq \Bool^{<\omega}$ that is prefix-closed, i.e., $b_1 \ldots b_n \in T$ implies $b_1\ldots b_i\in T$ for all $0 \leq i\leq n$. We call $\epsilon$ the \emph{root of $T$}, a maximal element w.r.t. the prefix order is called \emph{leaf},  and elements of the tree are called \emph{nodes}, which are denoted by $u,v,  \ldots$. A \emph{successor} of a node $u \in T$ is a node $ub \in T$ for some $b \in \{0,1\}$. A \emph{branch}  of $T$ is a subset $\branch \subseteq T$ such that $\epsilon\in \branch$ and, if $v\in \branch$ and $v$ has successors, then there is exactly one $b \in \{0,1\}$ such that $vb \in \branch$. Infinite branches are also represented as families $(v_i)_{i \in \omega}$ such that $v_0= \epsilon$ and $v_{i+1}= v_i b$ for some $b \in \{0,1\}$.
\end{defi}

\begin{defi}[Rules and (co)derivations]\label{def:preproofs}
    We consider \emph{(inference) rules} of the form $\vliiinf{\infrule}{}{\Gamma}{\Gamma_1}{\ldots}{\Gamma_n}$ with $n \in\{0,1,2\}$, where $\Gamma_1, \ldots, \Gamma_n$ are called \emph{premises} and $\Gamma$ is called \emph{conclusion}. A \emph{coderivation} over a set of rules $X$ is a pair $\der=(T, \ell)$ where $T$ is a tree called the \emph{support} of $\der$ and $\ell: T \to X$ is a label function such that, whenever $\ell(v)$ is a rule with $n$ premises $\Gamma_1,\dots, \Gamma_n$, then $v$ has precisely successors $v_1, \ldots, v_n$ and $\ell(v_i)$ is a rule with conclusion $\Gamma_i$ for all $1 \leq i \leq n$. A coderivation with finite support is called a \emph{derivation}. The \emph{conclusion} of a (co)derivation is the conclusion of the rule $\ell(\epsilon)$. Given some $u \in T_\der$, we write $\der_u$ for the (co)derivation with support $\supp{\der_u} =  \{v: uv \in T_\der \}$ given by $\ell(v) \dfn \ell(uv)$. We call such $\der_u$ a \emph{sub-(co)derivation} of $\der$. If $\der$ has only finitely many sub-coderivations, we call it \emph{cyclic} (or \emph{regular}). Cyclic coderivations $\der$ can be represented as a finite (possibly cyclic) graph.  
\end{defi}

\section{Arithmetical theories of generalised inductive definitions} \label{sec:2}

In what follows, we introduce the theory $\muPA$ and its semantics\footnote{This theory is often called $\mu$-calculus in the literature (see, e.g., \cite{de1969theory, moellerfeld02:gen-ind-dfns}. To date, however,  $\mu$-calculi represent a rather diverse family of systems. In this paper we adopt the formulation of the theory from~\cite{Curzi023}, which is referred to as  $\muPA$ to stress its arithmetic component.}. The theory can be constructed as a formal extension of  second-order arithmetic by defining  least fixed points $\mu X x \phi$  as relation symbols, as done in~\cite{moellerfeld02:gen-ind-dfns}. Instead we choose to treat them syntactically as ``binders'' and formulate the theory in a first-order fashion following~\cite{Curzi023}.

\subsection{Language of arithmetic with fixed points}

The \emph{language of $\muPA$}, written $\langmuarith$, is an extension of  the language of first-order arithmetic (with inequality)  $\langarith \dfn \{0,\succ, +,\times, <\}$ where, as usual, $0$ is a constant symbol (i.e.\ a $0$-ary function symbol), $\succ$ is a unary function symbol, $+$ and $\times$ are binary function symbols, and $<$ is a binary relation symbol. Specifically, the language of  $\muPA$ is obtained   from $\langarith$ by adding countably many set variables (i.e., predicate symbols)\footnote{W.l.o.g., in this paper we restrict the language  to second-order variables ranging over sets of individuals.}, written $X,Y$ etc, and fixed point binders $\mu$ and $\nu$.

\begin{defi}[Terms and (pre-)formulas]\label{defn:formulas}

The \emph{number terms}, the \emph{set terms}  and the \emph{pre-formulas} of $\muPA$ are simultaneously generated by the following grammars: 
\[
\begin{array}{rcll}
t, s&\dfn & x \ \vert \  0 \ \vert \ \succ \, t \ \vert \ s +  t \ \vert \ s \times t   & \text{(number terms)} \\[2ex]
\phi, \psi &\dfn &  s<t \ \vert \  s \nless t \ \vert \ s=t \ \vert \ s \neq t \ \vert \ t \in S \ \vert \ t \notin S \ \vert \qquad \\[1ex]   &&\phi \wedge \psi \ \vert \ \phi \vee \psi \ \vert \
\forall x \phi \ \vert \ \exists x\phi    & \text{(pre-formulas)} 
\\[2ex]
S &\dfn & X \  \vert \ \mu Xx\phi \ \vert \ \nu Xx \phi &  \text{(pre-set-terms)}
\end{array}
\] 
\todoinline{we need to be consistent about whether we write $\mu Xx \phi$ or $\mu Xx .\phi$}

The notions of \emph{free first-order variable} and \emph{free second-order variable} are standard. Notice that  $\nu$ and $\mu$ bind the free (individual and set) variables $X$ and $x$ in $\phi$. We denote with $FV_0 (\phi)$, $FV_0( t)$, and $FV_1( \phi)$
the set of free first-order variable in $\phi$ and $t$ and the set of free second-order variable in $\phi$, respectively.    We will use the standard notation for (capture-avoiding) first-order and second-order substitution  $\phi[S/X]$,  $\phi[s/x]$, and $t[s/x]$.

Pre-formulas of the form $s<t$, $s \nless t$, $s=t$, $s \neq t$, $t \in S$ and $t \notin S$ are called  \emph{atomic}.
While we don't have negation  in the language we use $\neg \phi$ as a shorthand in the following way:

\[
\arraycolsep=2pt
\def\arraystretch{1.3}
\begin{array}{rclrcl}
\neg s<t &:=& s \nless t  \qquad \qquad &\neg t \notin S &:=& t \in S  \\
 \neg s \nless t &:=& s < t & \neg s=t &:=& s \neq t  \\
\neg t \in S &:=& t \notin S  &\neg s \neq t &:=& s = t 
\end{array}
\]
\[
\arraycolsep=2pt
\def\arraystretch{1.3}
\begin{array}{rclrcl}
 \neg (\phi \wedge\psi ) &:=& \neg \phi \vee \neg \psi   \qquad \qquad& \neg \exists x \phi &:=& \forall x \neg\phi \\
 \neg (\phi \vee \psi ) &:=& \neg \phi \wedge \neg \psi &\quad \neg    t \in \mu X x  \phi( X)&:=& t \in \nu X x \neg \phi(\neg X) \\
 \neg \forall x \phi &:=& \exists x \neg\phi & \neg    t \in \nu X x  \phi( X) &:=& t \in \mu X x \neg \phi(\neg X)
\end{array}
\]

where $ y \in \neg X$ is shorthand for $\neg y \in X$.
Note that $\neg \neg \phi = \phi$.

Given a set variable $X$, we say that $X$ occurs \emph{positively} in a pre-formula $\phi$ if $\phi$ contains no pre-formulas of the form  $t \notin X$ for any term $t$. We say $\phi$ is a \emph{formula} if, for any sub-pre-formula in $\phi$ of the form $t \in \mu Xx\psi$ or $t \in \nu Xx\psi$, $X$ occurs positively in $\psi$. We say a pre-set-term $S$ is a set-term if $t \in S$ is a formula for some $t$.

\end{defi}

From now on, we will work with formulas where every second-order variable is bound. 
We adopt the standard abbreviations for connectives, such as $\phi \to \psi\dfn \neg \phi \vee \psi$, and $\phi \leftrightarrow \psi \dfn (\phi \to \psi) \wedge (\psi \to \phi)$, $\top= \phi \vee \neg \phi$ and $\bot\dfn \neg \top$.  We also write $\mathsf{n}$ for the \emph{numeral} of $n \in \mathbb{N}$, defined inductively by $\mathsf{0}\dfn 0$ and $\mathsf{n+1}\dfn \succ \mathsf{n}$. Finally, 
we will feel free of writing $\phi(\nu X x \phi(X), t) $ for $\phi[\nu X x \phi(X)/X, t/x] $.

\begin{exa}[Some fixed point formulas] \label{exmp:running-example-formulas}  Consider the following formulas of $\langmuarith$:
\begin{itemize} \setlength\itemsep{0.2cm}
    \item ${N}(X, x)\dfn x=0 \vee \exists y (y \in X \wedge x= \succ y)$
    \item ${L}(X, Y, x)\dfn x = \varepsilon \vee \exists z,w(z \in X \wedge w \in Y \wedge x= z:: w)$.
\end{itemize}
where the term $\varepsilon$ encodes the empty list, and the term $x::y$ encodes the operation of appending (the encoding of) an element $y$ to (the encoding of) a list $x$ (see~\autoref{defn:lists-and-trees})\footnote{With a small abuse of notation we will often not distinguish between lists and numbers coding lists in $\muPA$. Moreover, since the append operation is primitive recursive, we will treat $::$ as a binary function symbol, so     $x::y$ will be considered a term of $\langmuarith$.}. By definition, $\langmuarith$ contains the set terms $\ntype \dfn \mu Xx.  N(X,x)$, $\ltype( Y)\dfn  \mu Xx.  L(X, Y, x) $, and $\ttype\dfn \mu Yy. y \in \ltype (Y)$. The latter formula illustrates a key aspect of $\muPA$. Indeed, as we will see, thanks  to the interdependencies of fixed points it is possible to define mutually (co)inductive predicates,  a crucial source of expressivity for $\muPA$.
\end{exa}

\subsection{Interpreting generalised inductive definitions}

We now define the semantics of $\langmuarith$ and  the intended model. We will interpret  $\mu Xx \phi$ and $\nu Xx\phi$ as    least and greatest fixed points of the (parametrised) operator induced by $\phi$. To this end, we recall (a version of) the Knaster-Tarski theorem.

\begin{prop}
    [Knaster-Tarski on the power set lattice]
    \label{knaster-tarski-on-P(N)}
    Let $S$ be a set and let $f : \pow S \to \pow S$ be monotone, i.e.\ $A\subseteq B \subseteq S \implies f(A)\subseteq f(B)$.
    Then $f$ has a least fixed point $\mu f$ and a greatest fixed point $\nu f$. 
    Moreover, we have:
    \(
    \lfp f = \bigcap \{A\subseteq S \mid f(A) \subseteq A\}
    \)
    and
    \(
    \gfp f = \bigcup \{ A \subseteq S \mid A\subseteq f(A)\}
    \).
\end{prop}

Furthermore, least and greatest fixed points can be computed from their ordinal approximants:

\begin{prop}\label{prop:approximant}
     Let $S$ be a set and let $f : \pow S \to \pow S$ be monotone.  Then: 
\[
    \lfp f = \bigcup_{\kappa \in \mathrm{Ord}} \mu^\kappa f \quad \text{and} \qquad  \gfp f = \bigcap_{\kappa \in \mathrm{Ord}} \nu^{\kappa}f
    \]
  where the $\kappa$-approximant $\mu^\kappa f$ and $\nu^\kappa f$ are inductively defined by
  \[
 \mu^\kappa f= \bigcup_{\lambda <\kappa} f(\mu^\lambda f) \quad \text{and} \qquad \nu^\kappa f= \bigcap_{\lambda <\kappa} f(\nu^\lambda f) 
  \]
  In particular, $\mu^0f=\emptyset$ and $\nu^0f=S$.
\end{prop}

We shall henceforth adopt the notation of the proposition above, writing $\lfp f$ and $\gfp f$ for the least and greatest fixed point of an operator $f$, when they exist.

\begin{defi}[Standard model] \label{def:structure}
The \emph{standard model} for $\langmuarith$ is the $\langmuarith$-structure $\model{N}$, whose domain is $\Nat$, $ 0^{\model{N}}=0$, $\succ^{\model{N}}$ is the successor, $+^{\model{N}}$ is the addition, $\times^{\model{N}}$ is the multiplication, and $<^{\model{N}}$ is the natural (strict) order on natural numbers.
\end{defi}

\begin{defi}[Assignment] \label{def:assignment}
 An \emph{assignment} is a map $\rho$ associating an element $\rho{(x)} \in \vert \model N \vert $ with every  first-order variable $x$ and a set $\rho{(X)} \subseteq \model N$ with every  set variable $X$. Given an assignment $\rho$ and $a \in \vert \N \vert$ we write $\rho[x \mapsto a]$ for the assignment s.t. $\rho[x \mapsto a](x)\dfn  a$ and $\rho[x \mapsto a](y) \dfn  \rho(y)$ for all  $y \neq x$.  Similarly, for $A \subseteq \vert \N \vert$ we write $\rho[X \mapsto A]$ for the assignment that sends $X$ to $A$ and otherwise preserves the mapping given by $\rho$.
 \end{defi}


 \begin{defi}[Satisfaction] \label{def:satisfaction}
Given an assignment $\rho$ we extend $\rho$ to all terms as follows:
\begin{itemize}
\item $\rho(0)= 0$
    \item $\rho( f(t_1, \dots, t_k) ) \dfn \interp \N f (\rho (t_1), \dots, \rho (t_k))$ \quad $f \in \{\succ, +, \times\}$
\end{itemize}
We define the satisfaction relation $\model{N}, \rho \models \phi$ by induction on the formula $\phi$ as follows:
\begin{itemize} \setlength\itemsep{1em}
    \item $\model{N}, \rho \models t < u $ iff $\rho (t) < \rho  (u)$
    \item $\model{N}, \rho \models  t \nless u $ iff $\rho (t) \nless \rho  (u) $
     \item $\model{N}, \rho \models t = u $ iff $\rho (t) = \rho  (u)$
      \item $\model{N}, \rho \models  t \neq u $ iff $\rho (t) \neq \rho  (u)$
    \item $\model{N}, \rho \models t \in X$ iff $\rho (t) \in \rho{(X)}$
     \item $\model{N}, \rho \models  t \not \in X$ iff $\rho (t) \not \in \rho{(X)}$
     \item $\model{N}, \rho \models   \phi \wedge \psi$ iff $\model{N}, \rho \models    \phi$  and  $\model{N}, \rho  \models    \psi$ 
     \item $\model{N}, \rho \models   \phi \vee \psi$ iff $\model{N}, \rho  \models    \phi$  or  $\model{N}, \rho  \models    \psi$ 
     \item $\model{N}, \rho \models  \forall x\phi $ iff $\model{N}, \rho[x \mapsto a]  \models    \phi$  for all $a \in \vert \N \vert$
     \item $\model{N}, \rho \models  \exists x\phi $ iff $\model{N}, \rho[x \mapsto a]  \models    \phi$  for some $a \in \vert \model{N} \vert$
     \item $\model{N}, \rho \models  t \in \mu X x\phi $ iff $\rho (t) \in (\mu X x\phi)^\model{N}$  
      \item $\model{N}, \rho \models  t \in \nu X x\phi $ iff $\rho (t) \in (\nu X x\phi)^\model{N}$
\end{itemize}
where $(\mu X x\phi)^\model{N}=\mu   \formop \phi x \rho ^{\model{N}}$ and  $(\nu X x\phi)^\model{N}=\nu   \formop \phi x \rho ^{\model{N}}$ with $ \formop \phi x \rho ^{\model{N}} : A  \mapsto \{a \in \vert \N \vert \ \vert \ \N, \rho[x,X \mapsto a,A] \models \phi \}$. We simply write $\phi^{\model{N}}$ to denote $ \formop \phi x \rho ^{\model{N}}$ when it is clear from the context.
\end{defi}

The definition above is justified by the following property, using Knaster-Tarski theorem:

\begin{prop}\label{prop:monotonicity}
 Let $\rho$ an assignment, and $\phi(x, X)$ a formula positive in $X$. Then, $\phi^{\model{N}}: \setvarint \N \to \setvarint \N $ is a  monotone function. 
\end{prop}
\begin{proof}
    By induction on the structure of $\phi$. 
\end{proof}

\begin{cor}
Let $\phi(x, X)$ be a formula positive in $X$ and $\rho$ an assignment. Then:
 \[
 \def\arraystretch{1.5}
 \begin{array}{rcl}
(\mu X x\phi)^\model{N}&=&\bigcap\{ A \subseteq \vert \N \vert  \ \vert \ \model{N}, \rho[X \mapsto A] \models \forall x ( \phi(x,X) \rightarrow x \in X  ) \}
\\ 
(\nu X x\phi)^\model{N}&=&\bigcup\{ A \subseteq \vert \N \vert  \ \vert \ \model{N}, \rho[X \mapsto A] \models \forall x (x \in X \rightarrow \phi(x,X))   \}
 \end{array}
\]
    
\end{cor}
\begin{proof}
Straightforward from~\autoref{prop:monotonicity}, Knaster-Tarski's theorem (cf.~\autoref{knaster-tarski-on-P(N)}) and \autoref{def:satisfaction}.
\end{proof}

\begin{prop}
 Let  $\phi$  be a  formula and $\rho$ as assignment. Then $\model{N}, \rho \models  \neg \phi$ iff $\model{N}, \rho \not \models    \phi$.
\end{prop}

\begin{exa}
Let $\rho$ be an assignment in the standard model $\N$. Revisiting~\autoref{exmp:running-example-formulas}, we have that:
\[
 \arraycolsep=1.7pt
\begin{array}{rclrclrcl}
  \mu^0N^{\model{N}}&=& \emptyset &\mu^0L^{\model{N}}&=& \emptyset&
     \mu^0T^{\model{N}}&=& \emptyset\\
     \mu^1 N^{\model{N}}&=& \{0\}&\mu^1L^{\model{N}}&=& \{\varepsilon\}&\mu^1T^{\model{N}}&=& \{\varepsilon\}\\
     \mu^2N^{\model{N}}&=& \{0,1\} & \mu^2L^{\model{N}}&=& \rho(Y)^{<1}&\mu^2T^{\model{N}}&=& (\mu^1T^{\model{N}})^{<\omega} \\
      \mu^3N^{\model{N}}&=& \{0,1,2\} & \mu^3L^{\model{N}}&=& \rho(Y)^{<2}&\mu^3T^{\model{N}}&=& (\mu^2T^{\model{N}})^{<\omega} \\
      &\vdots& & &\vdots& & &\vdots& \\
       \mu^\omega N^{\model{N}}&=& \mu^{\omega+1}N^{\model{N}}= \mathbb{N} \qquad & \mu^\omega L^{\model{N}}&=&\mu^{\omega+1}L^{\model{N}}=  \rho(Y)^{<\omega}  \qquad &\mu^\omega T^{\model{N}}&=& \mu^{\omega+1}T^{\model{N}}=\mathrm{HFL}
\end{array}
\]
 where  $\mathrm{HFL}$ is set of (the encodings of) hereditarily finite lists. Therefore, we have: $\ntype^{\model{N}}=(\mu Xx. N)^\model{N}=\mu   N^\model{N}= \mathbb{N}$, $\ltype(Y)^{\model{N}}= (\mu Xx. L(X, Y, x))^{\model{N}}=\mu  L^\model{N}=\rho(Y)^{<\omega}$, and $\ttype^{\model{N}}=(\mu Yy. T(Y, y))^{\model{N}}=\mu  T^{\model{N}}$. Alternatively, we can see $\ttype^{\model{N}}$ as the set of (the encodings) of rose trees, i.e., trees with unbounded (but finite) branching.

\end{exa}

\subsection{The theory $\muPA$}

The \emph{theory} $\muPA$   over the  language $\langmuarith$ as presented in this paper is essentially the first-order part of M\"ollerfeld's $\ACA(\langmuarith)$~\cite{moellerfeld02:gen-ind-dfns}, which extends  
Peano arithmetic (defined by the rules $\mathrm{PA}_1-\mathrm{PA}_8$ in~\autoref{fig:PA})  by the following axioms schemas for formulas $\psi(x)$ and $\phi(X,x)$ positive in $X$: 
    \begin{itemize} \setlength\itemsep{10pt}
        \item[] ($\preaxiom$-$\mu$) \quad $\forall y (\phi(\mu X x \phi, y) \limp y \in \mu X x \phi)$
        \item[] ($\indaxiom$-$\mu$)  \quad$\forall y (\phi(\psi, y) \limp \psi(y)) \limp \forall y (y \in \mu X x \phi  \limp \psi(y))$
         \item[] ($\preaxiom$-$\nu$) \quad$\forall y (y \in \nu X x \phi\limp \phi(\nu X x \phi, y))$
        \item[] ($\indaxiom$-$\nu$)  \quad$\forall y ( \psi(y)\limp \phi(\psi, y)) \limp \forall y (  \psi(y)\limp y \in \nu X x \phi )$\\
    \end{itemize}
    
The first (resp., third) axiom states that $\mu X x \phi$ (resp., $\nu X x \phi$) is a \emph{pre-fixed point} (resp., \emph{post-fixed point}). The second (resp., forth)  axiom states that it is the least among the  pre-fixed points (resp., the greatest of all post-fixed points). 

We will express $\muPA$ in sequent calculus style. In particular, thanks to the dualities of classical logic, we  adopt so-called ``one-sided'' presentation, thus omitting the turnstile symbol ``$\vdash$''.

\begin{defi}
    [Sequent calculus for $\muPA$]
   A \emph{sequent} $\Gamma $ is a list of formulas. The sequent  calculus system for  $\muPA$ extends $\LKeq$   (i.e., first-order logic with equality and substitution)  in~\autoref{fig:lkeq} by:
   
   \begin{itemize}\setlength\itemsep{0.2cm}
       \item the axioms $\mathsf{PA}_1$-$\mathsf{PA}_7$ and $\ntype$  in~\autoref{fig:PA}
       \item   the rules $\mu$ and $\xind{\phi}$  in~\autoref{fig:muPA}.
   \end{itemize} 
 We call $\wk$, $\exch$, $\theta\mathsf{-sub}$ $\cut$ \emph{structural rules}; $\id$, $\wedge$, $\vee$, $\forall$, $\exists$, $=$, $=\mathsf{-sub}$, $\nu$, $\mu$, $\mathsf{ind}(\phi)$ are called \emph{logical rules}; finally, \autoref{fig:PA} displays the \emph{arithmetic rules}.  In such rules,  $\Gamma$, $\Delta$, $\theta(\Gamma)$, $\Gamma(t)$, $\Gamma(s)$ are called \emph{contexts}, which are lists of formulas called \emph{side formulas}. {The formulas $\phi$ and $\psi$ in the rule $\exch$ are also called side formulas.} All the other formulas in the conclusion (resp., in a premise) of a rule  are called \emph{principal formulas} (resp.,  \emph{active}).
\end{defi}

It is easy to see that the sequent calculus system $\muPA$ derives the axioms ($\preaxiom$-$\mu$), ($\indaxiom$-$\mu$), ($\preaxiom$-$\nu$), ($\indaxiom$-$\nu$). For more details, see~\autoref{prop:axiomatic-implies-sequent} in~\autoref{app:derivations}. On the other hand, the following example shows derivability of the rule $\nu$ and that, in presence of $\xind{\phi}$,  numeric induction  and the rule $\ntype$ are equivalent.

\begin{rem}
    The one-sided sequent calculus streamlines the proofs. However, the more intuitive two-sided system will  still be used for examples to facilitate the reading, and so the notation $\Delta \vdash \phi $ will be considered as shorthand for $\neg \Delta, \phi$. As an example, the rule $\xind\phi$  becomes  \[\vliinf{}{\text{\footnotesize $ y$ fresh}}{\Delta , t \in \mu X  x \phi \vdash \Gamma  }
  {\Delta,  \psi(t) \vdash \Gamma }{\Delta, \phi(\psi, y) \vdash    \Gamma , \psi( y)   }\] Also, in the examples,  we will omit applications of structural rules as well as some arithmetical rules, and use double lines to abbreviate multiple applications of rules. 
\end{rem}

\begin{figure}
	\centering
	\[
	\begin{array}{c}
 \vlinf{\id}{
	}{\Gamma, \neg \phi,\phi}{}
	\qquad 
	\vlinf{\wk}{}{\Gamma, \Delta }{ \Gamma}
	\qquad
    \vlinf{\exch}{}{\Gamma,\phi, \psi, \Delta}{\Gamma,\psi, \phi ,\Delta}
    \qquad
	\vlinf{\theta \text{-} \mathsf{sub}}{}{ \theta(\Gamma) }{ \Gamma}
	\qquad
	\vliinf{\cut}{}{\Gamma}{\Gamma,  \phi}{\Gamma, \neg \phi}
\end{array}
\]

 \bigskip
\[
\begin{array}{c}
\vliinf{\wedge}{}{\Gamma, \phi \land \psi}{\Gamma, \phi}{\Gamma, \psi}
\qquad
\vlinf{\vee}{}{\Gamma, \phi \lor \psi}{\Gamma, \phi,\psi}
\qquad 
\vlinf{\forall}{\text{\footnotesize $y$ fresh}}{\Gamma , \forall x \phi}{\Gamma, \phi[y/x]}
	\qquad 
	\vlinf{\exists}{}{\Gamma , \exists x \phi(x)}{\Gamma, \phi(t)}
    \end{array} 
	\] 
\bigskip
\[
\begin{array}{c}
 		\vlinf{= \text{-} \mathsf{sub}}{}{  \Gamma(t),\neg (s=t)}{\Gamma(s) }
	\qquad 
	\vlinf{=}{}{t=t}{}
\end{array} 
	\] 
	\caption{The sequent calculus $\LKeq$ for first-order logic with equality, where  $\theta$   is  a substitution, i.e.\ a map from variables to terms, extended to formulas and sequents in the expected way. } 
	\label{fig:lkeq}
\end{figure}

\begin{figure}\centering
  \[
  \vlinf{\mathsf{PA}_i}{}{\mathrm{PA}_i}{} \qquad  \qquad \qquad 
  \vlinf{\ntype}{}{t \in \ntype}{}
  \]
    \bigskip
    \[
    \def\arraystretch{1.2}
    \arraycolsep=1.4pt
\begin{array}{rcl}
\mathrm{PA}_1&\dfn &0 \neq \succ x\\
\mathrm{PA}_2&\dfn &	\succ x = \succ y \limp x=y\\
\mathrm{PA}_3&\dfn &x+0 =x\\
\mathrm{PA}_4&\dfn 	&x+ \succ y = \succ (x+y)
\end{array}\quad 
\begin{array}{rcl}
\mathrm{PA}_5&\dfn &x\times 0 = 0\\
\mathrm{PA}_6&\dfn 	&x\times \succ y = (x\times y) + x\\
\mathrm{PA}_7&\dfn &x<y \liff \exists z . x+\succ z=y\\
\mathrm{PA}_8&\dfn &(\phi(0) \land \forall x (\phi(x) \limp \phi(\succ x))) \limp \forall x \phi(x) \quad \ \phi \textrm{ arithmetical}
\end{array}
    \]
\caption{Sequent calculus axioms for $\PA$ and the rule $\ntype$ (see~\autoref{exmp:running-example-formulas}).}
	\label{fig:PA}
\end{figure}
 
\begin{figure}
    \centering
    \[
  \vlinf{ \nu}{}{\Gamma, t \in \nu X x \phi  }{\Gamma, \phi[\nu X  x \phi/X, t/x]} 
  \qquad 
  \vlinf{ \mu}{}{\Gamma, t \in \mu Xx \phi}{\Gamma,  \phi[\mu X  x \phi/X, t/x]}
	\]
 
  \bigskip
 \[
		\vliinf{\xind\phi}{\text{\footnotesize $ y$ fresh}}{\Gamma, t \in \nu X  x \phi  }
  {\Gamma,  \psi(t) }{\Gamma, \neg \psi( y) ,   \phi(\psi, y)   }
	\]
    \caption{Sequent calculus rules for least and greatest fixed points. } 
    \label{fig:muPA} 
\end{figure}

\begin{exa}[Post-fixed points and numerical induction]\label{exmp:derivability-examples}
The following derivations show that the rule $\nu$ from $\xind{\phi}$ and numerical induction can be subsumed by using $\xind{\phi}$ and the rule $\ntype$ in~\autoref{fig:PA}: 
  \[
\vlderivation{
\vliin{\xind{\phi}}{}{\Gamma, t \in \nu Xx\phi}{\vlhy{\Gamma, \phi(\nu Xx\phi, t)}}{
  \vliq{}{}{\Gamma, \neg \phi(\nu Xx\phi, y ), \phi(\phi(\nu Xx\phi, y ), y) }
    {
    \vlin{\mu}{}{\Gamma, y \in \mu Xx \neg \phi(\neg X), \phi(\nu Xx\phi, y)}
       {
       \vlid{=}{}{\neg \phi (\neg \mu Xx\neg \phi(\neg X), y), \phi(\nu Xx\phi, y)}
          {
          \vlin{\id}{}{\neg \phi (\nu Xx \phi, y), \phi(\nu Xx\phi, y)}{\vlhy{}}
          }
       }
    }
}
}
\]
\[
\vlderivation{
\vliq{\vee, }{}{t \not \in \ntype, ( \phi(0)\wedge \forall x(\phi(x)\to \phi(\succ x)))\to \forall x\phi(x)}
 {
 \vlin{\forall}{}{t \not \in \ntype, \neg \phi(0), \neg \forall x(\phi(x)\to \phi(\succ x)), \forall x\phi(x)}
  {
     \vlid{=}{}{t \not \in \ntype, \Gamma}
        {
         \vliin{\xind{\phi}}{}{\Gamma, t \in \nu Xx.\neg N(\neg X, x) }
     {\vlin{\id}{}{\Gamma, \neg \phi({y})}{\vlhy{}}}
           {
           \vliin{\wedge}{}{\Gamma,   \phi (z), \neg N(\neg\neg {\phi}, {z})}
          {
        \vlin{=\text{-}{\mathsf{sub}}}{}{\neg z=0, \Gamma, \phi(z)}
         {\vlin{\id}{}{\Gamma, \phi(0)}{\vlhy{}}}
           }
           {
         \vlin{=\text{-}{\mathsf{sub}}}{}{\neg z= \succ y, \neg \phi(y), \Gamma,  \phi(z)}
          {
          \vltr{}{\neg \phi(y), \Gamma, \phi({ \succ y})}{\vlhy{\ }}{\vlhy{\ }}             {\vlhy{\ }}
           }
           }
    }
  }
}
}
}
    \]
  In the latter derivation we let $\Gamma = \forall x (\phi(x) \to \phi   (\succ x) )$. Notice that the double lines labelled with $\phi$ express a ``functoriality'' property (see~\autoref{prop:funct-ind} in~\autoref{app:derivations}).

\end{exa}

In fact, the choice of $\ntype$ over numerical induction has the only purpose of simplifying the definition of progressing trace in the next section.

\section{Cyclic proofs for the theory of generalised inductive definitions}\label{sec 3}

In this section we introduce a cyclic version of the theory $\muPA$, based on ideas from the modal $\mu$-calculus \cite{NW96:games-for-mu,Studer08:pt-for-mu,AfshariLeigh17:cut-free-mu} and calculi of first-order logic with inductive definitions \cite{Brotherston05:cyc-prf,BroSim06:seq-calc-inf-desc}. Notice that this section borrows notions and notation from~\autoref{def:preproofs}.

\subsection{Progressivity and cyclic proofs} 

 We can define a notion of non-wellfounded proof (i.e., infinite but finitely branching proof) which we call coderivation:

\begin{defi}[Coderivations]
A \emph{($\muPA$-)coderivation}, is a coderivation over the set of rules extending  the rules of $\LKeq$ (\autoref{fig:lkeq}) by: 
\begin{itemize}\setlength\itemsep{0.2cm}
    \item the arithmetic rules $\mathsf{PA}_1$-$\mathsf{PA}_7$ and $\ntype$ in \autoref{fig:PA}.
    \item the fixed point rules $\mu$ and $\nu$ in~\autoref{fig:muPA}.
\end{itemize}
\end{defi}

Intuitively, ($\muPA$-)coderivations are built from the same set of rules as $\muPA$ except that the rule $\nu$ replaces fixed point induction $\xind\phi$. We will show that the latter can be subsumed by the former in the context of cyclic proofs. 

We now adapt to our setting a well-known validity  criterion from non-wellfounded proof theory.

\begin{defi}
[Traces and progress]
Let $\der$ be a coderivation, and  let $(v_i)_{i \in \omega}$ be an infinite branch along $ \der$. A \emph{trace} along $(v_i)_{i\in \omega}$ is a sequence of {occurrences} formulas ${\tau=}(\phi_i)_{i\geq k}$ such that for all $i\geq k$, $\phi_i$ is in the conclusion of a rule $\infrule$, $\phi_{i+1}$ is in a premise of $\infrule$, and one of the following cases applies:
\begin{itemize}
\item  $\phi_i$ is the principal formula and $\infrule=\exch$  with conclusion $\Gamma, \phi_i, \psi, \Delta$ and  premise  $\Gamma, \psi, \phi_{i+1}, \Delta$ 
\item $\phi_i$ is the principal formula and $\infrule=\exch$  with conclusion $\Gamma, \psi, \phi_i, \Delta$ and premise  $\Gamma, \phi_{i+1}, \psi, \Delta$
\item  $\phi_i$ is the principal formula with $\infrule\neq\exch$ and   $\phi_{i+1}$ is an active formula
\item  $\phi_i$ and  $\phi_{i+1}$ are the $n$-th side formulas in the context, for some $n$\footnote{Notice that in this case  $\phi_i$ might be obtained from $\phi_{i+1}$ by performing substitution of terms.}.
\end{itemize}
We say that $\phi_{k+1}$ is an \emph{immediate ancestor} of $\phi_k$ if they extend to some trace $(\phi_i)_{i\geq k}$.
If $\phi_i = t \in \nu Xx.\phi$ (respectively $\phi_i = t \in \mu Xx.\phi$) for some $t$, the rule $\infrule$ is $\nu$ (respectively $\mu$) and $\phi_i$ is principal then we say $\nu Xx.\phi$ (respectively $\mu Xx.\phi$) progresses at $i$. We say a trace $(\phi_i)_{i\geq k}$ is \emph{progressing} if there is a $\nu$-term $\nu Xx.\phi$ such that $\nu Xx.\phi$ progresses infinitely often and $\nu Xx.\phi$ is minimal (with respect to number of symbols) among all set-terms that progress infinitely often.

\end{defi}


The following is a well-known property about traces:

\begin{prop}[Folklore]\label{prop:equivalent-trace-condition}
    Let $\der$ be a coderivation. If  $(\phi_i)_{i \in \omega}$ is a progressing trace and $\nu Xx\psi$ is an infinitely often progressing set-term with minimal size then:
    \begin{itemize}
        \item for infinitely many $i$, $\phi_i=t \in \nu Xx\psi$ (for some $t$),
        \item for all but finitely many $i$, $\phi_i$ contains $\nu Xx \psi$ as a subterm.
    \end{itemize}
\end{prop}
\todoinline{Check set term substitution}

\begin{defi}
[Progressing and cyclic proofs]
A \emph{progressing proof} (also called \emph{trace proof}) is a coderivation $\der$ for which each infinite branch has a progressing trace.   
  We write $\cmuPA$ for the class of cyclic progressing proofs and $\cmuPA \vDash \phi$  if there is a cyclic progressing proof of $\phi$ (see~\autoref{def:preproofs}).
\end{defi}

Henceforth, we may mark two rule steps by a special symbol (e.g., $\bullet$ or $\star$)  in a cyclic coderivation to indicate roots of identical sub-coderivations.

\begin{exa} The cyclic coderivation   \[\vlderivation{
\vlin{\mu}{\bullet}{t \in \mu Xx (x\in X)}{\vlin{\mu}{\bullet}{t \in \mu Xx(x\in X)}{\vlhy{\vdots}}}
}
\] is not progressing, as its (only) branch looping at $\bullet$ has no $\nu$-formula, and so any trace traversing such branch  cannot be progressing. On the other hand the following is an example of a cyclic progressing proof (see~\autoref{exmp:running-example-formulas}):

\[
\vlderivation{
\vlin{\nu}{\bullet}{\orange{ x  \in \ttype}\vdash x \in \ntype}
  {
  \vlin{\nu}{\star}{\orange{ x  \in \ltype( \ttype)} \vdash x \in \ntype}
    {
    \vliiq{\wedge, \vee, \forall}{}{\orange{ L( \ltype( \ttype),  \ttype, x)}\vdash x \in \ntype}
     {
     \vlin{=\text{-}\mathsf{sub}}{}{\blue{z  \in \ltype( \ttype)} , \red{w    \in \ttype}, x= z ::w\vdash  x \in \ntype}
       {
        \vliq{\text{arithmetic}}{}{\blue{z  \in \ltype( \ttype)}, \red{w    \in \ttype}\vdash   z ::w  \in \ntype}
           {
           \vliin{\wedge}{}{\blue{z  \in \ltype( \ttype)}, \red{w    \in \ttype}\vdash  z \in \ntype \wedge w \in \ntype}
             {
             \vlin{\theta\text{-}\mathsf{sub}}{}{\blue{z \in \ltype(\ttype)}\vdash z \in \ntype}{
          \vlin{}{\star}{\blue{x \in \ltype(\ttype)}\vdash x \in \ntype}{\vlhy{\vdots}}
          }
             }{
             \vlin{\theta\text{-}\mathsf{sub}}{}{\red{w  \in \ttype}\vdash  w \in \ntype}
             {
             \vlin{}{\bullet}{\red{x  \in \ttype}\vdash  x \in \ntype}{\vlhy{\vdots}}
             }
             }
        }
       }
     }
         {
    \vltr{\der}{ x=\varepsilon \vdash  x \in \ntype}{\vlhy{}}{\vlhy{}}{\vlhy{}}}
    }
  }
}
\]

The coderivation syntactically shows that $\ttype^\N\subseteq \ntype^\N$, i.e., 
that (the encodings of) hereditarily finite lists are   natural numbers. Despite the statement being tautological in nature, the formal argument is non-trivial and requires nesting of cycles, due to interleaving of fixed points defining $\ttype$. 

Now, for any infinite branch $\branch$ one of the following cases applies:
\begin{itemize}
    \item $\branch$ eventually loops only at $\star$, in which case the trace given by the alternation of $\orange{orange}$ and  $\blue{blue}$ formulas is progressing and the smallest $\nu$-term unfolding infinitely often is $\ltype(\ttype)$ (in negative position). 
    \item $\branch$ eventually loops only at $\bullet$, in which case the trace given by the alternation of $\orange{orange}$ and  $\red{red}$ formulas is progressing and the smallest $\nu$-term unfolding infinitely often is $\ttype$  (in negative position). 
    \item $\branch$ loops infinitely often at both $\star$ and $\bullet$, in which case the alternation of the $\orange{orange}$ formulas,  the  $\blue{blue}$ formulas (looping at $\star$), and the $\red{red}$ formulas (looping at $\bullet$) is progressing and the smallest $\nu$-term unfolding infinitely often is   $\ttype$ (in negative position).
\end{itemize}
Therefore, any infinite branch has a progressing trace, and so the coderivation  above is a cyclic progressing proof.
\end{exa}

\subsection{Simulating inductive proofs}
Our cyclic system $\cmuPA$ subsumes $\muPA$ by a standard argument:
\begin{thm}
[Induction to cycles]
\label{id-to-cid}
If $\muPA \proves \phi$ then $\cmuPA \proves \phi$.
\end{thm}
\begin{proof}
[Proof sketch]
    We proceed by induction on the structure of a $\muPA$ proof.
    The critical step is $\xind \phi$, for which we do not have a corresponding rule in $\cmuPA$.
    We simulate this rule by the following cyclic progressing proof:
  \[
    \vlderivation{
    \vliin{\cut}{}{\Gamma,  t \in \nu X x \phi }
    {
        \vlhy{\Gamma, \psi(t)}
    }
    {
        \vlin{\nu}{\bullet}{\Gamma,\neg \psi(t), \blue{t \in \nu X x \phi}}{
        \vliin{\cut}{}{\Gamma,\neg\psi(t), \blue{\phi(\nu Xx \phi, t)}}{
            \vliq{\phi}{}{\Gamma, \blue{\phi(\nu Xx \phi, t)},\neg \phi(\neg \psi,t)}{
            \vlin{\nu }{\bullet}{\Gamma, \blue{t \in \nu X x \phi } , \neg \psi(t)}{\vlhy{\vdots}}
            }
        }{
            \vlin{[t/y]}{}{\Gamma, \neg \psi(t), \phi(\psi,t)}{\vlhy{\Gamma, \neg\psi(y), \phi(\psi,y)}}
        }
        }
    }
    }
    \]
  where  the rule steps  marked $\phi$ can be defined by induction on the structure of $\phi$ (see~\autoref{prop:funct-cycl} in~\autoref{app:cmupa}).
    To check progressivity of the above coderivation, notice that any infinite branch is either progressing by the induction hypothesis, or loops infinitely on $\bullet$ and has the progressing trace coloured in \blue{blue}.
\end{proof}

\section{A reset proof system for $\muPA$}\label{sec 4}

In this section we introduce an alternative route to cyclic proofs that allows for a more local and simpler (although more constrained) 
validity condition: so-called \emph{reset proofs} (e.g.,~\cite{AfshariLeigh17:cut-free-mu,4,5,15}. Specifically, we define a \emph{(cyclic) reset proof system} for $\muPA$, called $\cmuPAord$,  inspired by the annotated systems for first-order $\mu$-calculus in~\cite{sprengerdam03:journal,Graham-cyclic-first-order-mu-calculus}. A key property of reset proof systems is the introduction of variables ranging over ordinals annotating fixed point formulas in a coderivation. Intuitively, an annotated fixed point   $ \mu^\kappa Xx\phi$, where $\kappa$ is an ordinal variable,   expresses syntactically an ordinal approximant of  $\mu Xx\phi$ (see~\autoref{prop:approximant}).  The advantage of this approach is the possibility to internalise certain features of the progressivity condition, which can be then subsumed in a much simpler condition called \emph{reset condition}. This allows for a more straightforward soundness argument. Soundness for $\cmuPA$ can be then established by defining a translation of progressing proofs of $\cmuPA$  into reset proofs.

\subsection{Syntax and semantics of $\muPAord$}

The \emph{language of $\muPAord$}, written $\langmuarithord$, is a three-sorted language obtained by augmenting  $\langmuarith$  by a {countable set of \emph{ordinal variables}  $\mathcal{OV}$} ranging over $\kappa, \lambda$. We extend~\autoref{defn:formulas} to include new set terms $\mu^\kappa Xx\phi$ and $\nu^\kappa Xx\phi$ for any ordinal variable $\kappa$.

Concerning semantics, the standard model is defined as in~\autoref{def:structure}, while assignments $\rho$ in~\autoref{def:assignment} are extended to associate with every ordinal variable $\kappa \in \mathcal{OV}$ an ordinal $\rho(\kappa)$. The notion of satisfaction in~\autoref{def:satisfaction} is then extended to the clauses $ t \in \mu^\kappa X x\phi$ and  $ t \in \nu^\kappa X x\phi$ as follows:

\begin{itemize}\setlength\itemsep{0.2cm}
 \item  $\model{N}, \rho \models  t \in \mu^\kappa X x\phi $ iff $\rho (t) \in \mu^{\rho(\kappa)}   \formop \phi x \rho $,
    \item  $\model{N}, \rho \models  t \in \nu^\kappa X x\phi $ iff $\rho( t) \in \nu^{\rho(\kappa)}   \formop \phi x \rho $.
\end{itemize}

All the remaining definitions and notational conventions smoothly adapt to the new setting.

\subsection{Constraints}

 Our reset proof system  derives special sequents,  endowed with  requirements on ordinal variables.

 \begin{defi}[Constraints and extended sequents]\label{defn:constraints-extended-sequents}
     A \emph{constraint} $\seqcon$ is a pair $(V_{\seqcon}, <_{\seqcon})$ where $V_{\seqcon}$ is a finite set of ordinal variables and $<_{\seqcon}$ is a strict partial order such that for each $\kappa \in V_{\seqcon}$, the set $\{ \lambda \in V_{\seqcon} \ | \ \kappa <_{\seqcon} \lambda \} $ is linearly ordered by
$<_{\seqcon}$. We sometimes write $\kappa \in \seqcon$ as shorthand for $ \kappa \in V_{\seqcon}$ and we write $\seqcon \proves \lambda < \kappa$ instead of $\lambda <_{\seqcon} \kappa$. We also write $\lambda \leq_{\seqcon} \kappa$ for $\lambda <_{\seqcon} \kappa$ or $\lambda = \kappa$. With a small abuse of notation, we denote with $\emptyset$ 
the constraint $\seqcon$ such that $V_{\seqcon}= \emptyset$.

An \emph{(annotated) sequent} $\Gamma$ is a finite {list} of formulas of $\muPAord$.  An \emph{extended sequent} is an expression $ \seqcon : \Gamma$ where $\Gamma$ is an annotated sequent and $\seqcon$ is a constraint such that $V_{\seqcon}$ contains all ordinal variables in $\Gamma$. Given an annotated sequent $\Gamma$, we denote with $\Gamma^-$ the sequent obtained by stripping away the ordinal variables appearing in $\Gamma$.
 \end{defi}

Intuitively, the condition $\lambda <_\constr \kappa$ is interpreted as the assumption that $\lambda$ is smaller than $\kappa$. As a consequence,  $\seqcon$ can be seen as a collection of trees with $V_{\seqcon}$ the set of vertices and the descendant relation the converse of the order $<_{\seqcon}$: the children of $\kappa \in V_{\seqcon}$ are then $\lambda \in V_{\seqcon}$ where $\lambda <_{\seqcon} \kappa$ and for no $\xi \in V_{\seqcon}$ do we have $\lambda <_{\seqcon} \xi <_{\seqcon} \kappa$.

The semantics for $\langmuarithord$ can be generalised to extended sequents in the obvious way. Given a constraint $\constr$, and assignment $\rho$, we define $\rho \models \constr$ iff $\lambda <_\constr \kappa$ implies $\rho(\lambda)<_\constr \rho(\kappa)$ for all $\lambda, \kappa \in \constr$. An extended sequent $\constr: \Gamma$ is \emph{satisfied in $\model{N}$} if, for all  assignments $\rho$ for $\model N$, $\rho \models \constr$ implies $\model{N}, \rho \models \bigvee \Gamma$.

\subsection{Annotated coderivations and reset proofs}

For the purpose of presenting the inference rules we require three operations on constraints:
$$\seqcon + \lambda := (V_{\seqcon} \cup \{ \lambda \} , <_{\seqcon})$$ 
$$ \seqcon +_{\kappa} \lambda := (V_{\seqcon} \cup  \{ \lambda \} , <_{\seqcon} \cup \ \{ (\lambda, \xi) \ | \ \kappa \leq_\constr \xi \} ) $$
$$\seqcon \setminus V := (U, <_{\seqcon} \cap \ U^2 ) \text{ where }V \subset V_{\seqcon} \text{ and } U = V_{\seqcon} \setminus V$$
The first operation corresponds to adding $\lambda$ as a fresh ordinal variable which becomes the root of a new tree in $\seqcon$. In the second operation $\lambda$ is added as a fresh child of the ordinal variable $\kappa$. Note that both of these operations preserve the condition that for each $\kappa \in V_{\seqcon}$, the set $\{ \lambda \in V_{\seqcon} \ | \ \kappa <_{\seqcon} \lambda \} $ is linearly ordered by
$<_{\seqcon}$ and so tree structure is also preserved.  The final operation is that of
restricting a constraint set to a specified subset of ordinal variables.  We write $\seqcon_1 \subseteq \seqcon_2$ to express that $\seqcon_1 = \seqcon_2 \setminus V$ for some set $V \subseteq V_{\seqcon_2}$.

\begin{figure}
	\centering
	\[
	\begin{array}{c}
 \vlinf{\id}{
	}{\constr:\neg \phi,\phi}{}
	\qquad 
	\vlinf{\wk}{}{\constr:\Gamma, \Delta }{\constr: \Gamma}
	\qquad
    \vlinf{\exch}{}{\constr:\Gamma,\phi, \psi, \Delta}{\constr:\Gamma,\psi, \phi ,\Delta}
    \qquad
	\vlinf{\theta \text{-} \mathsf{sub}}{}{ \constr:\theta(\Gamma) }{ \constr:\Gamma}
	\qquad
	\vliinf{\cut}{}{\constr:\Gamma}{\constr:\Gamma,  \phi}{\constr:\Gamma, \neg \phi}
\end{array}
\]

 \bigskip
\[
\begin{array}{c}
\vliinf{\wedge}{}{\constr:\Gamma, \phi \land \psi}{\constr:\Gamma, \phi}{\constr:\Gamma, \psi}
\qquad
\vlinf{\vee}{}{\constr:\Gamma, \phi \lor \psi}{\constr:\Gamma, \phi,\psi}
\qquad 
\vlinf{\forall}{\text{\footnotesize $y$ fresh}}{\constr:\Gamma , \forall x \phi}{\constr:\Gamma, \phi[y/x]}
	\qquad 
	\vlinf{\exists}{}{\constr:\Gamma , \exists x \phi(x)}{\constr:\Gamma, \phi(t)}
    \end{array}
\] 
\bigskip
\[
 \begin{array}{c} 
   		\vlinf{= \text{-} \mathsf{sub}}{}{  \constr:\Gamma(t),\neg s=t}{\constr:\Gamma(s) }
	\qquad 
	\vlinf{=}{}{\constr:t=t}{}
\qquad 
\overline{ F(\overline{C}) }  \vlinf{\mathsf{PA}_i}{}{\constr:\mathrm{PA}_i}{1 \leq i \leq 7}
  \qquad 
  \vlinf{\ntype}{}{\constr:t \in \ntype}{}
     \end{array}
\]
\bigskip
  \[
	\begin{array}{c}  
    \vlinf{ \mu}{}{ \seqcon : \Gamma, t \in \mu Xx \phi}{\seqcon : \Gamma,  \phi[\mu X  x \phi/X, t/x]}
    \qquad
    \vlinf{ \nu}{\text{\footnotesize $\lambda$ fresh}}{\seqcon : \Gamma, t \in \nu X x \phi  }{\seqcon + \lambda:  \Gamma, t \in \nu^{\lambda} X x \phi } 
\end{array}
\]
\bigskip
  \[
	\begin{array}{c}  
 \vlinf{ \nu^{\kappa}}{\text{\footnotesize $\lambda$ fresh}}{\seqcon : \Gamma, t \in \nu^{\kappa} X x \phi  }{\seqcon +_{\kappa} \lambda : \Gamma, \phi[\nu^{\lambda} X  x \phi/X, t/x]}  \qquad \vlinf{\mathsf{cw}}{}{ \seqcon :  \Gamma }{ \seqcon \setminus L :  \Gamma}
 \end{array}
\] 
\caption{The inference rules defining annotated coderivations  of $\muPAord$, where $\mathsf{PA}_i$ are the axioms of Peano arithmetic in~\autoref{fig:PA}.} 
	\label{fig:mupaord rules-extended}
\end{figure}

\begin{defi}[Annotated coderivations]
An  \emph{annotated coderivation} is a coderivation over  the set of  rules in~\autoref{fig:mupaord rules-extended}.
Notice that in the rules $\nu$ and $\nu^\kappa$ we require  that the ordinal variable $\lambda $ does not occur in the conclusion.    
\end{defi}

We will utilise as additional rules two  special cases of the constraint weakening rule ($\mathsf{c}\wk$). The first rule is called \emph{reset rule} (\autoref{fig:reset-pruning}, left) and requires that  $\kappa$ does not occur in $\Gamma$. The set $C(\kappa)$ consists of all and only the children of $\kappa$ in $\constr$ and is non-empty. Note that $C(\kappa)$ will typically \emph{not} consist of all descendants of $\kappa$; only the immediate children of $\kappa$ are removed by an application of the reset rule. Notice also that the definition of extended sequent ensures that no ordinal variable in $C(\kappa)$ occur in any formula in $\Gamma$. 

The second rule is a novelty of this paper, and  is  called  \emph{pruning rule} (\autoref{fig:reset-pruning}, right). The set  $P(\Gamma)$  should be non-empty and consists of all ordinal variables such that all their descendants do not occur in $\Gamma$ (i.e., $\{ \kappa \in \constr \ | \ \forall \lambda \in \constr (\lambda \leq_{\constr} \kappa \implies \lambda \text{ does not occur in $\Gamma$}) \}$). In other words, the pruning rule discards all the ordinal variables that are ``dead ends'', in the sense that none of their descendants play any role in the annotated coderivation.

While the reset rule is the key ingredient for defining the validity condition on annotated coderivations, we introduce the pruning rule specifically to avoid unbounded increase of ordinal variables in infinite branches of annotated coderivations.

We are now ready to define the reset condition, which represents a counterpart to the progressivity condition in the setting of annotated coderivations.

\begin{defi}[Reset proofs] Let $\der$ be an annotated coderivation, and let $\branch=(\constr_i: \Gamma_i)_{i \geq 0}$ be an infinite branch of $\der$. We say that $\kappa$ is a \emph{stable variable} in $\branch$ if there is $i_0 \geq 0$ such that  $\kappa \in \constr_{i}$, for every $i \geq i_0$. We say that $\kappa$ is a \emph{reset variable} in $\branch$ if there are infinitely many occurrences of $\rs \kappa$ in $\branch$. 

We say that $\der$ is a  \emph{reset proof} if every infinite branch has a stable reset variable. We denote with  $\cmuPAord$ the class of  cyclic reset proofs. 
\end{defi}

\begin{exa}
Consider the following annotated coderivation of the formula $t \in \nu Xx (x \in X)$.
$$
\vlderivation{
\vlin{\nu}{}{\emptyset: t \in \nu Xx(x \in X)}
  {
  \vlin{\nu^{\lambda}}{}{(\{\lambda\}, \emptyset): t \in \nu^{\lambda} Xx(x \in X)}
  {
  \vlin{\nu^{\kappa}}{\bullet}{(\{\lambda, \kappa\}, \{(\kappa, \lambda)\}): t \in \nu^{\kappa} Xx(x \in X)}
  {
   \vlin{RS(\lambda)}{}{(\{\lambda, \kappa, \theta\}, \{(\kappa, \lambda), (\theta, \kappa), (\theta, \lambda)\}): t \in \nu^{\theta} Xx(x \in X)}
  {
  \vlin{\nu^{\theta}}{}{(\{\lambda,  \theta\}, \{(\theta, \lambda)\}): t \in \nu^\theta Xx(x \in X)}
  {
  \vlin{RS(\lambda)}{}{(\{\lambda, \theta, \kappa\}, \{(\theta, \lambda), (\kappa, \theta), (\kappa, \lambda)\}): t \in \nu^{\kappa} Xx(x \in X)}
  {
  \vlin{\nu^\kappa}{\bullet}{(\{\lambda, \kappa\}, \{(\kappa, \lambda)\}): t \in \nu^{\kappa} Xx(x \in X)}{\vlhy{\vdots}}
  }
  }
  }
  }
  }
  }
  }
$$ 
It has only one infinite branch (looping at $\bullet$) that unfolds infinitely often (annotations of) the formula $t \in \nu Xx (x \in X)$. The branch contains a stable reset variable $\lambda$. Therefore, the annotated coderivation is a (cyclic) reset proof.
\end{exa}

\begin{figure}
    \centering
    \[
\vlinf{\rs \kappa}{}{\constr: \Gamma}{\constr \setminus C(\kappa): \Gamma}
\qquad
\vlinf{\pr}{}{\constr: \Gamma}{\constr \setminus P(\Gamma): \Gamma} 
    \]
    \caption{From left, the reset rule and the pruning rule.}
    \label{fig:reset-pruning}
\end{figure}

\begin{rem}
As previously mentioned,  $\muPAord$ is based on the reset proof system for the $\mu$-calculus developed in~\cite{sprengerdam03:journal}, and later refined in~\cite{Graham-cyclic-first-order-mu-calculus}. Compared to these systems, $\muPAord$  differs in many respects. First, concerning  syntax, $\muPAord$ is endowed with the language of Peano arithmetic and omits any quantification over ordinal variables. Concerning the proof system, $\muPAord$ is formulated in one-sided sequent calculus style. Moreover, we  avoid using finite representations of cyclic proofs in terms of cyclic graphs. This also allows us to circumvent further technicalities in checking both validity and cyclicity conditions (e.g., the introduction of so-called \emph{order-coherece} in~\cite{Graham-cyclic-first-order-mu-calculus}.).
\end{rem}

The main result of this section is soundness for reset proofs. This is a standard argument towards contradiction that leverages  a preliminary lemma, which shows that inference rules locally preserve soundness, and constructs an infinite branch of the reset proof that reflects falsity.  Specifically, our argument is essentially borrowed from ~\cite{Graham-cyclic-first-order-mu-calculus}.

\begin{lem}[Local soundness] \label{lemma:one-step-branch-lemma}

For any inference rule $\infrule$ of $\muPAord$ with conclusion $\constr: \Sigma$, if  $\model N , \rho \nvDash \constr : \Sigma $ then there is  a premise  $\constr' : \Sigma'$ of $\infrule$  and assignment $\rho'$ s.t.

\begin{enumerate}
\item \label{cond:1} $\model N , \rho' \nvDash \constr' : \Sigma' $ and
    \item \label{cond:2} for any $ \kappa \in \constr \cap \constr'$ we have
      $ \rho'(\kappa)\leq \rho(\kappa)$.
$\overline{ F(\overline{C}) }$
\end{enumerate}
If moreover, $\infrule= \rs \kappa$ then $\rho'(\kappa)<\rho(\kappa) $ \lukasedit{in~\autoref{cond:2}.}{.}
\end{lem}

\begin{proof}

We reason by case analysis on the the rule $\infrule$. We notice that $\infrule$ cannot have zero premises, as those rules are clearly sound in the standard model while we assume  $\model N , \rho \nvDash \constr : \Sigma$ by induction hypothesis. So, we suppose the rule has at least one premise. 

\begin{itemize}
    \item If the rule is $\vliinf{\cut}{}{\constr : \Gamma}{\constr :  \Gamma,  \phi}{\constr :  \Gamma, \neg \phi}$  then we define $\rho'$ to be the extension of $\rho$ that maps every fresh number variable in $\phi$ to $0$. Note that no fresh ordinal variables occur in $\phi$ since the constraint is unchanged. Then either $\model N , \rho' \nvDash  \phi$ or $\model N , \rho' \nvDash  \neg \phi$ and we set $\constr': \Gamma'$ to be the premise that is not satisfied. 
    \item  If the rule is $\vliinf{\wedge}{}{\constr : \Gamma, \phi \land \psi}{\constr :  \Gamma, \phi}{\constr :  \Gamma, \psi}$ then we set $\rho'\dfn \rho$. By the inductive hypothesis we have $\model N , \rho' \nvDash  \phi \land \psi$ and so $\model N , \rho' \nvDash  \phi$ or $\model N , \rho' \nvDash \psi $,  and we set $\constr': \Gamma'$ to be the premise that is not satisfied. 
    \item If the rule is $\vlinf{\vee}{}{\constr : \Gamma, \phi \lor \psi}{\constr : \Gamma, \phi,\psi}$ then we set $\rho'\dfn \rho$. By the induction hypothesis  we have $\model N , \rho' \nvDash  \phi \lor \psi$ and so $\model N , \rho'\nvDash  \phi, \psi$.
    \item If the rule is $\vlinf{\forall}{\text{\footnotesize $y$ fresh}}{\constr : \Gamma , \forall x \phi}{\constr : \Gamma, \phi[y/x]}$. By inductive hypothesis we have $\model N , \rho \nvDash  \forall x \phi $ and so there exists $a \in \mathbb{N}$ s.t. $\model N , \rho[x \mapsto a] \nvDash  \phi$. Since $y$ is fresh we can assign $y$ to $a$ without affecting $\Gamma$. We set  $\rho' = \rho[y \mapsto a]$. 
     \item If the rule is $\vlinf{\exists}{}{\constr : \Gamma , \exists x \phi(x)}{\constr : \Gamma, \phi(t)}$ then let $\rho'$ be the extension of $\rho$ that maps every fresh variable in $t$ to $0$. By induction hypothesis we have $\model N , \rho \nvDash  \exists x \phi $ and so for every $a \in \model N$ we have $\model N , \rho[x \mapsto a] \nvDash  \phi$, in particular for $a = \rho'(t)$.
    \item If the rule is $\vlinf{ \mu}{}{ \seqcon : \Gamma, t \in \mu Xx \phi}{\seqcon : \Gamma,  \phi[\mu X  x \phi/X, t/x]}$ then by induction hypothesis $\model N , \rho \nvDash t \in \mu Xx \phi $ and so $\rho(t)$ is not in the least fixed point $\mu f$ for $f:A \mapsto \{ a \in \model N \ | \  \model N , \rho \vDash \phi(A, a)   \}$. Since $\mu f$ is a fixed point, we have $\rho(t) \in \mu f \iff \rho(t) \in \{ a \in \model N \ | \  \model N , \rho \vDash \phi( \mu f , a)   \}$. Finally, note that $\rho(t) \in \{ a \in \model N \ | \  \model N , \rho \vDash \phi(\mu f , a)   \}$ if and only if $\model N , \rho \vDash \phi(\mu X  x \phi, t)$. Thus, we can simply let $\rho'\dfn\rho$.
    
    \item If the rule is $\vlinf{ \nu}{\text{\footnotesize $\lambda$ fresh}}{\seqcon : \Gamma, t \in \nu X x \phi  }{\seqcon + \lambda:  \Gamma, t \in \nu^{\lambda} X x \phi } $ then by induction hypothesis we have that $\model N , \rho \nvDash t \in \nu Xx \phi $ and so $\rho(t)$ is not in the greatest fixed point $\nu f$ of $f: A \mapsto \{ a \in \model N \ | \  \model N , \rho \vDash \phi(A, a)   \}$. Since $\rho(t) \notin \nu f$ there must be some ordinal $\alpha$ s.t. $\rho(t) \notin \nu^{\alpha} f$. Let $\rho' = \rho[\lambda \mapsto \alpha] $. Since $\lambda$ is fresh $\rho'$ will still falsify $\Gamma$ and it is clear that $\rho' \vDash \constr + \lambda$. Note that the function $f': A \mapsto \{ a \in \model N \ | \  \model N , \rho'\vDash \phi(A, a)   \}$ is identical to $f$ since $\lambda$ is fresh and so $\model N , \rho' \nvDash t \in \nu^{\lambda} Xx \phi $ since $\rho'(t)=\rho(t) \notin \nu ^{\alpha} f = \nu ^{\rho'(\lambda) } f'$.
\item If the rule is $\vlinf{ \nu^{\kappa}}{\text{\footnotesize $\lambda$ fresh}}{\seqcon : \Gamma, t \in \nu^{\kappa} X x \phi  }{\seqcon +_{\kappa} \lambda : \Gamma, \phi[\nu^{\lambda} X  x \phi/X, t/x]}$ then let $f:A \mapsto \{ a \in \model N \ | \  \model N , \rho \vDash \phi(A, a)   \}$. Since $\model N , \rho \nvDash t \in \nu^{\kappa} Xx \phi $ we have $\rho(t) \notin \nu ^{\rho(\kappa)} f$ and so $\rho(t) \notin f ( \nu ^{\alpha} f )$ for some $\alpha < \rho(\kappa)$. Let $\rho'= \rho[\lambda \mapsto \alpha]$. Note $\rho' \vDash \constr +_{\kappa} \lambda$, and since $\lambda$ is fresh $\Gamma$ is still falsified by $\rho'$,  and the function $f': A \mapsto \{ a \in \model N \ | \  \model N , \rho' \vDash \phi(A, a)   \}$ is identical to $f$. Now $\rho'(t)= \rho(t) \notin f ( \nu ^{\alpha} f ) = f' ( \nu ^{\rho' ( \lambda ) } f ) $ and so $\model N , \rho'\nvDash \phi[\nu^{\lambda} X  x \phi/X, t/x]$.

\item For any instance of $\mathsf{cw}$ that is not the reset rule (including the pruning rule), we simply let $\rho'\dfn \rho$.

    \item Finally, if the rule is 
$
\vlinf{\rs \kappa}{}{\constr: \Gamma}{\constr \setminus C(\kappa): \Gamma}
$
 then let the children of
$\kappa$ be $\lambda_0, \hdots \lambda_m$. Then we define $\rho'$ to be the restriction of $\rho$ to the remaining ordinal variables except $\rho'(\kappa) = \max \{ \rho(\lambda_0), \hdots \rho(\lambda_m) \}$. Since $\kappa$ does not occur in $\Gamma$ all we have to check to show that falsity is preserved is that $\rho' \vDash \constr'$. 
If $\constr' \proves \xi < \xi'$ for $\xi \neq \kappa$ and $ \xi' \neq \kappa$ then $\constr \proves \xi < \xi'$ and $\rho'(\xi) = \rho(\xi) < \rho(\xi')=\rho'(\xi')$. 
If $\constr' \proves \kappa < \xi$ then also $\constr \proves \kappa < \xi$ and so $\rho(\kappa) < \rho( \xi ) = \rho'( \xi ) $. But $\rho'(\kappa) = \max \{ \rho(\lambda_0), \hdots \rho(\lambda_m) \} < \rho(\kappa)$ and so $\rho'(\kappa) < \rho'( \xi )$
If $\constr' \proves \xi < \kappa$ then $\constr \proves  \xi < \lambda_i$ for some $i \leq m$ and so $\rho'(\xi) = \rho(\xi) < \rho( \lambda_i) \leq \max \{ \rho(\lambda_0), \hdots \rho(\lambda_m) \} = \rho' (\kappa)$.

    \item All the remaining cases (i.e., $\wk$, $\exch$, $\theta\text{-}\mathsf{sub}$ and $=\text{-}\mathsf{sub}$) are straightforward.
\end{itemize}

\end{proof}

\begin{thm}[Soundness] \label{thm:Graham-soundness}
    For any  reset proof $\der$  with conclusion $\constr : \Gamma$, $\N \models \constr: \Gamma$.
\end{thm}
\begin{proof}
Assume towards contradiction that $\der$ is a  reset proof with conclusion $\constr : \Gamma $ s.t.  $\model N , \rho \nvDash \constr : \Gamma$ for some assignment $\rho$. By \autoref{lemma:one-step-branch-lemma} we can construct  an infinite branch $(\constr_i : \Gamma_i )_{i \geq 0}$  of  $\der$   and  a sequence of assignments $(\rho_i)_{i < \omega}$  such that  $\constr_0 : \Gamma_0 \dfn \constr : \Gamma$, $\rho_0\dfn \rho$ and for all $i \geq 0$:
\begin{enumerate}
    \item \label{enum:sat1}$\model N , \rho_i \nvDash \constr_i : \Gamma_i$ 
    \item \label{enum:sat2} for any  $\kappa \in \constr_i\cap \constr_{i+1}$ we have  $\rho_{i+1} (\kappa) \leq \rho_i(\kappa)$
    \item \label{enum:sat3} if $\kappa \in \constr_i$ and $\constr_i : \Gamma_i$ is the conclusion of a reset rule $RS(\kappa)$  then $\rho_{i+1} (\kappa) < \rho_i(\kappa)$.
\end{enumerate}

Since $\der$ is a reset proof, there will be some stable reset variable $\kappa$ along this branch, and so for some $m$ we  have $\kappa \in \constr_{n}$ for every $n\geq m$. Consider the ordinal sequence $(\rho_n (\kappa) )_{n \geq m}$. By construction we have that $\rho_{n+1} (\kappa) \leq \rho_n (\kappa)$ for every $n \geq m$ and $\rho_{n+1} (\kappa) < \rho_n (\kappa)$ at every reset step. Since $\kappa$ is reset infinitely many times this contradicts wellfoundedness of ordinals. 
\end{proof}

\section{From progressing proofs to reset proofs}\label{sec 5}

In this section we define a translation of ($\muPA$-)coderivations of a sequent $\Gamma$ into annotated coderivations \emph{of the same sequent} (endowed with an empty constraint). The main result of this section is  that, in particular, the translation maps progressing proofs into reset proofs and  preserves regularity of the proof tree. This allows us to obtain soundness of progressing proofs from~\autoref{thm:Graham-soundness}.

\subsection{The translation}

Our translation will essentially ``decorate'' $\muPA$-coderivations with ordinal variables and corecursively construct constraints accordingly.

\begin{defi}[Ordinal assignments]
An \emph{ordinal assignment} is a list $\ell$ over $\mathcal{OV}\cup\{\star\}$. We denote concatenation of two ordinal assignments $\ell$ and $\ell'$ with $\ell \cdot \ell'$. Given a formula $\phi$ of  $\muPA$, we say that \emph{$\ell$ has type $\phi$}, written $\ell\dect \phi$, if the number of occurrences of $\nu$ in $\phi$ is equal to the length of $\ell$. Given $\ell\dect\phi$ with $\ell=a_1, \ldots, a_k$, the   \emph{$\ell$-annotation of $\phi$},  written $\phi^\ell$, is  a formula of $\langmuarithord$ obtained by replacing the  $i$-th occurrence of $\nu$ with $\nu^{a_i}$. We call  $\vec{\star}= \star \overset{k}{\ldots} \star$ the \emph{identity ordinal assignment (for $\phi$)}, and $\phi^{\vec \star}$ its  \emph{identity annotation}. 

{Given $\ell \dect \phi(x, X)$ we denote with $\nu^\kappa(\ell)$ the \emph{unique} ordinal assignment $\nu^\kappa(\ell)\dect \phi(t, \nu Xx.\phi)$ such that  $(\phi(t, \nu Xx.\phi))^{\nu^\kappa(\ell)}=\phi^\ell(t, \nu^\kappa Xx.\phi^\ell)$}. 
{Similarly, given $\ell \dect \phi(x, X)$ we denote with $\mu(\ell)$ the \emph{unique} ordinal assignment  $\mu(\ell)\dect \phi(t, \mu Xx.\phi)$ such that  $(\phi(t, \mu Xx.\phi))^{\mu(\ell)}=\phi^\ell(t, \mu Xx.\phi^\ell)$.
}

We generalise ordinal assignments to contexts. Given $\Gamma= \phi_1, \ldots, \phi_n$ with $\ell_i \dect\phi_i$, we say that \emph{$\vec \ell= \ell_1, \ldots, \ell_n$ has type $\Gamma$}, written  $\vec  \ell\dect \Gamma$, if  $\ell_1\dect \phi_1, \ldots, \ell_n\dect \phi_n$. Similarly, the  \emph{$\vec \ell$-annotation of $\Gamma$}, written $\Gamma^{\vec \ell}$, is the list  $\phi_1^{\ell_1}, \ldots, \phi_n^{\ell_n}$.

\end{defi}

\begin{exa}
    Consider $\phi\dfn t \in \nu X x (s \in \nu Yy (x \in X \vee y \in Y \vee \ntype))$, and let $\kappa\in \mathcal{OV}$. Then, $\ell\dfn\kappa \star$ is an ordinal assignment for $\phi$, and $\phi^\ell=t \in \nu^\kappa X x (s \in \nu^\star Yy (x \in X \vee y \in Y \vee \ntype))=t \in \nu^\kappa X x (s \in \nu Yy (x \in X \vee y \in Y \vee \ntype))$.
\end{exa}

\begin{figure}
    \centering
     \[
\begin{array}{rcl}
    \begin{pmatrix}
         \vlderivation{
\vlin{}{\infrule}{\Gamma}{\vlhy{}}
    }
    \end{pmatrix}^{\constr\mid \vec \varepsilon}  &\mapsto &
   \vlderivation{
   \vlin{}{\infrule}{\constr: \Gamma}{\vlhy{}}
   } \qquad \infrule \in \{\id, =,   \mathsf{PA}_1, \ldots, \mathsf{PA}_7, \ntype\}
\\ \\ \\
 \begin{pmatrix}
         \vlderivation{
\vlin{}{\wk}{ \Gamma, \Delta}{\vldr{\der}{\Gamma}}
    }
    \end{pmatrix}^{\constr\mid \vec \ell, \vec{\ell'}}  &\mapsto &
    \vliqf{}{\rp}{\constr:  \Gamma^{\vec \ell}, \Delta^{\vec {\ell'}}}{
 \vlinf{}{\wk}{\rpf\constr:   \Gamma^{\vec  \ell}, \Delta^{\vec {\ell'}}}{\ \ \ \ \begin{pmatrix} \vltreeder{\der}{ \Gamma}{\ \  }{\ \ \ \ }{\ \  }\end{pmatrix}^{\rpf \constr\mid  \vec \ell}}
    } \\ \\ \\ 
     \begin{pmatrix}
         \vlderivation{
\vlin{}{\exch}{ \Gamma, \phi , \psi, \Delta}{\vldr{\der}{\Gamma, \psi, \phi, \Delta}}
    }
    \end{pmatrix}^{\constr\mid \vec \ell, \ell , \ell', \vec{\ell'}}  \!\!\!\! &\mapsto  &
    \vliqf{}{\rp}{\constr: \Gamma^{\vec \ell}, \phi^{\ell}, \psi^{\ell'}, \Delta^{\vec {\ell'}}}{
 \vlinf{}{\exch}{\rpf\constr:   \Gamma^{\vec \ell}, \phi^{\ell}, \psi^{\ell'}, \Delta^{\vec {\ell'}}}{\ \ \ \ \begin{pmatrix} \vltreeder{\der}{ \Gamma, \phi, \psi, \Delta}{\ \  }{\ \ \ \ }{\ \  }\end{pmatrix}^{\rpf \constr\mid  \vec \ell, \ell', \ell, \vec{\ell'}}}
    } 
\\ \\ \\ 
    \begin{pmatrix}
         \vlderivation{
\vlin{}{\infrule}{ \Gamma}{\vldr{\der}{\Gamma'}}
    }
    \end{pmatrix}^{\constr\mid \vec \ell}  &\mapsto  &
    \vliqf{}{\rp}{\constr:  \Gamma^{\vec \ell}}{
 \vlinf{}{\infrule}{\rpf\constr:  \Gamma^{\vec \ell}}{\ \ \ \ \begin{pmatrix} \vltreeder{\der}{ \Gamma'}{\ \  }{\ \ \ \ }{\ \  }\end{pmatrix}^{\rpf \constr\mid \vec \ell}}
    }\qquad\infrule\in \{\theta \text{-} \mathsf{sub},\forall, \exists,  = \text{-} \mathsf{sub}\} 
  \\ \\ 
    \end{array} 
\]
       \caption{Translating coderivations to annotated coderivations (Part 1).}
    \label{fig:transl-1}
\end{figure}

\begin{figure}
    \centering
    \small  \[
\begin{array}{rcll}
      \begin{pmatrix}
         \vlderivation{
\vlin{}{\vee}{ \Gamma, \phi \vee \psi}{\vldr{\der}{\Gamma, \phi, \psi}}
    }
    \end{pmatrix}^{\constr\mid \vec \ell, \ell \cdot \ell'} & \mapsto & 
    \vliqf{}{\rp}{\constr:  \Gamma^{\vec \ell}, \phi^\ell \vee \psi^{\ell'} }{
 \vlinf{}{\vee}{\rpf\constr:  \Gamma^{\vec \ell}, \phi^\ell \vee \psi^{\ell'}}{\ \ \ \ \begin{pmatrix} \vltreeder{\der}{ \Gamma, \phi, \psi}{\ \  }{\ \ \ \ }{\ \  }\end{pmatrix}^{\rpf \constr\mid \vec \ell, \ell, \ell'}}
    }& \\ \\ 
     \end{array} 
     \]
    \[
\begin{array}{rcl}
    \begin{pmatrix}
         \vlderivation{
\vliin{}{\wedge}{\Gamma, \phi \wedge \psi}{\vldr{\der}{\Gamma, \phi}}{\vldr{\der'}{\Gamma, \psi}}
    }
    \end{pmatrix}^{\constr\mid \vec \ell, \ell\cdot\ell'} & \mapsto & 
    \vliqf{}{\rp}{\constr: \Gamma^{\vec \ell}, \phi^\ell \wedge \psi^{\ell'}}{
 \vliinf{}{\wedge}{\rpf\constr: \Gamma^{\vec \ell}, \phi^\ell \wedge \psi^{\ell'}}{\ \ \ \ \begin{pmatrix} \vltreeder{\der}{\Gamma, \phi}{\ \  }{\ \ \ \ }{\ \  }\end{pmatrix}^{\rpf \constr\mid  
 \vec \ell, \ell}}{\ \ \ \ \begin{pmatrix} \vltreeder{\der'}{ \Gamma,  \psi}{\ \  }{\ \ \ \ }{\ \  }\end{pmatrix}^{\rpf \constr\mid\vec \ell, \ell'}}
    }
\\ \\ \\ 
    \begin{pmatrix}
         \vlderivation{
\vliin{}{\cut}{\Gamma}{\vldr{\der}{\Gamma, \phi}}{\vldr{\der'}{\Gamma, \neg\phi}}
    }
    \end{pmatrix}^{\constr\mid \vec \ell} & \mapsto & 
    \vliqf{}{\rp}{\constr: \Gamma^{\vec \ell}}{
 \vliinf{}{\cut}{\rpf\constr: \Gamma^{\vec \ell}}{\ \ \ \ \begin{pmatrix} \vltreeder{\der}{ \Gamma, \phi}{\ \  }{\ \ \ \ }{\ \  }\end{pmatrix}^{\rpf \constr\mid \vec \ell, \vec \star}}{\ \ \ \ \begin{pmatrix} \vltreeder{\der'}{ \Gamma, \neg \phi}{\ \  }{\ \ \ \ }{\ \  }\end{pmatrix}^{\rpf \constr\mid \vec \ell, \vec \star}}
    }
\end{array}
   \]
    \caption{Translating coderivations to annotated coderivations (Part 2).}
    \label{fig:transl-2}
\end{figure}

\begin{figure}
    \centering
    \[
\begin{array}{rcll}
 \begin{pmatrix}
         \vlderivation{
\vlin{}{\mu}{ \Gamma, t \in \mu Xx.\phi}{\vldr{\der}{\Gamma, \phi(t, \mu Xx. \phi)}}
    }
    \end{pmatrix}^{\constr\mid \vec \ell, \ell } & \mapsto & 
    \vliqf{}{\rp}{\constr: \Gamma^{\vec \ell}, t \in \mu Xx.\phi^{\ell}}{
    \vlinf{}{\mu}{\rpf\constr:   \Gamma^{\vec \ell}, t \in \mu Xx.\phi^{\ell}}{
    \vlidf{}{=}{\rpf\constr:   \Gamma^{\vec \ell}, \phi^\ell (t, \mu Xx.\phi^\ell)}{
 \vlidf{}{=}{\rpf\constr:   \Gamma^{\vec \ell}, (\phi (t, \mu Xx.\phi))^{\mu(\ell)}}{\ \ \ \ \begin{pmatrix} \vltreeder{\der}{ \Gamma, \phi (t, \mu Xx.\phi)}{\ \  }{\ \ \ \ }{\ \  }\end{pmatrix}^{\rpf \constr\mid  \vec \ell, \mu(\ell)}}
    }}} \\ \\ \\ 
     \begin{pmatrix}
         \vlderivation{
\vlin{}{\nu}{ \Gamma, t \in \nu Xx.\phi}{\vldr{\der}{\Gamma,\phi(t, \nu Xx.\phi)}}
    }
    \end{pmatrix}^{\constr\mid \vec \ell, \star\ell } & \mapsto & 
    \vliqf{}{\rp}{\constr: \Gamma^{\vec \ell}, t \in \nu Xx.\phi^{\ell}}{
 \vlinf{}{\nu}{\rpf\constr:   \Gamma^{\vec \ell}, t \in \nu Xx.\phi^{\ell}}{\vlinf{}{\nu^\kappa}{\rpf\constr+ \lambda: \Gamma^{\vec \ell},t \in \nu^\lambda Xx.\phi^{\ell} }{\vlidf{}{=}{(\rpf\constr+ \lambda) +_\lambda \kappa: \Gamma^{\vec \ell},
\phi^\ell(t, \nu^\kappa Xx.\phi^{ \ell}) }{\vlidf{}{=}{(\rpf\constr+ \lambda) +_\lambda \kappa: \Gamma^{\vec \ell},(\phi(t,\nu Xx.\phi))^{\nu^\kappa( \ell)}}{ {\ \ \ \ \begin{pmatrix} \vltreeder{\der}{ \Gamma, \phi(t, \nu Xx.\phi)}{\ \  }{\ \ \ \ }{\ \  }\end{pmatrix}^{(\rpf\constr+ \lambda) +_\lambda \kappa\mid  \vec \ell, \nu^\kappa( \ell)}}}}}}
    }  \\ \\ \\ 
         \begin{pmatrix}
         \vlderivation{
\vlin{}{\nu}{ \Gamma, t \in \nu Xx.\phi}{\vldr{\der}{\Gamma, \phi(t, \nu Xx.\phi)}}
    }
    \end{pmatrix}^{\constr\mid \vec \ell, \lambda\ell } & \mapsto & 
    \vliqf{}{\rp}{\constr: \Gamma^{\vec \ell}, t \in \nu^\lambda Xx.\phi^{\ell}}{\vlinf{}{\nu^\kappa}{\rpf\constr: \Gamma^{\vec \ell},t \in \nu^\lambda Xx.\phi^{\ell} }{\vlidf{}{=}{\rpf\constr +_\lambda \kappa: \Gamma^{\vec \ell},
\phi^\ell(t, \nu^\kappa Xx.\phi^{ \ell}) }{\vlidf{}{=}{\rpf\constr +_\lambda \kappa: \Gamma^{\vec \ell},(\phi(t,\nu Xx.\phi))^{\nu^\kappa( \ell)}}{ {\ \ \ \ \begin{pmatrix} \vltreeder{\der}{ \Gamma, \phi(t, \nu Xx.\phi)}{\ \  }{\ \ \ \ }{\ \  }\end{pmatrix}^{\rpf\constr +_\lambda \kappa\mid  \vec \ell, \nu^\kappa( \ell)}}}}}
    }  
     \end{array} 
    \]
    \caption{Translating coderivations to annotated coderivations (Part 3).}
    \label{fig:transl-3}
\end{figure}

\begin{defi}[Translation]\label{defn:translation}  Let $\constr$ and $\vec \ell$ such that $\vec \ell\dect \Sigma$ {and all ordinal variables in $\ell$ occur in $\constr$}. We define coinductively the translation $(\cdot)^{\constr\mid\vec \ell}$  mapping a $\muPA$-coderivation  with conclusion $\Sigma$  into an annotated coderivation  with conclusion $\constr : \Sigma^{\vec \ell}$ as in~\autoref{fig:transl-1},  \autoref{fig:transl-2} and \autoref{fig:transl-3} and relying on the following conventions:
\begin{enumerate}
\item W.l.o.g., to simplify the translation we assume that $\id$ steps in $\muPA$-coderivations have only atomic conclusions (see \autoref{prop:gen-id} in~\autoref{app:cmupa}).
    \item \label{conv1} We fix an enumeration $\xi_0, \xi_1, \ldots$ of  $\mathcal{OV}$ and  assume that, when constructing an annotated coderivation bottom-up, the rules $\nu$ and  $\nu^\kappa$  pick from the enumeration the ordinal variable with least index  that does not appear in $\constr$. 
    \item \label{conv2}  {With 
    $\vliqf{\rp}{}{\constr: \Gamma}{\constr^\rp: \Gamma}$
we denote that $\constr^\rp: \Gamma$ is obtained from $\constr: \Gamma$ by applying $RS(\kappa)$ and $\pr$ until no instances of $RS(\kappa)$ or $\pr$ can be applied to $\constr^\rp$. Note that this is possible since $\constr$ is finite and each application of $RS(\kappa)$ and $\pr$ removes at least one element.}
\end{enumerate}

 \end{defi}

We conclude by noticing that the translation essentially preserves the structure of branches as well as their progressing traces.

\begin{fact}\label{lem:branch-to-branch}
  Let $\der$ be a $\muPA$-coderivation. Any branch
  $\branch=(\constr_i: \Gamma_i)_{i \geq 0}$ of
  $\der^{\emptyset \mid \vec{\star}}$ induces a unique branch
  $\widehat{\branch}=(\Delta_i)_{i \geq 0}$ of $\der$ such that
  $(\Gamma_i^{-})_{i \geq 0}$ enumerates $(\Delta_i)_{i \geq 0}$, possibly with repetitions.

  Moreover, if $\widehat{\tau}= (\phi_i)_{i \geq i_0}$ is a progressing trace of
  $\widehat{\branch}$, and $\nu Xx.\phi$ is its smallest infinitely progressing set-term,
  then there is a trace $\tau=(\psi_i)_{i \geq i_1}$ of $\branch$ such that
  $\tau^- \dfn (\psi_i^-)_{i \geq i_1}$ enumerates $(\phi_i)_{i \geq i_0}$, possibly with
  repetitions. Hence $\tau^-$ is a progressing trace and $\nu Xx.\phi$ is the smallest
  infinitely progressing set-term in $\tau^-$.
\end{fact}

\begin{fact}\label{lem:branch-to-branch}
  Let $\der$ be a $\muPA$-coderivation.  Any branch $\branch=(\constr_i: \Gamma_i)_{i \geq 0}$ of $\der^{\emptyset \mid \vec{\star}}$ induces a unique 
    branch $\widehat{\branch}=(\Delta_i)_{i \geq 0}$ of $\der$ such that  $(\Gamma_i^{-})_{i \geq 0}$ enumerates  $(\Delta_i)_{i \geq 0}$ (possibly with repetitions). Moreover, if $\widehat{\tau}= (\phi_i)_{i \geq i_0}$ is a progressing trace of $\widehat{\branch}$, and $\nu Xx.\phi$ is its smallest infinitely set term, then there is  $\tau=(\psi_i)_{i \geq i_1}$ of $\branch$ such that
         $\tau^-\dfn(\psi^-_i)_{i \geq i_1}$ enumerates $(\phi_i)_{i \geq i_0}$ (possibly with  repetitions), so 
     $\tau^-$ is a progressing trace and $\nu Xx.\phi$ is the smallest infinitely set-term in $\tau^-$. 
\end{fact}

\subsection{Progressivity implies reset condition}

As it stands, the translation of a progressing proof might not be a reset proof. To see this, consider the following example:
\[
\der\dfn 
\begin{array}{rcl}
     \vlderivation{
\vliq{}{\nu, \vee}{\blue{\phi}}{
  \vliq{}{\nu, \vee}{\phi, \blue{\phi}}{
   \vliq{}{\nu, \vee}{\phi, \phi, \blue{\phi}}{
      \vliq{}{\nu,, \vee}{\phi, \phi, \phi, \blue{\phi}}{\vlhy{\vdots}}
  }
  }
}
}
\quad 
&\mapsto& \quad 
 \vlderivation{
\vliq{}{\nu, \vee}{\emptyset:\phi}{
\vliq{}{\nu^{\kappa_1}, \vee}{\{\lambda\}:\phi^\lambda}{
  \vliq{}{\nu^{\kappa_2}}{\{\kappa_1<\lambda\}:\phi^{\kappa_1}, \phi^{\kappa_1}}{
      \vliq{}{\nu^{\kappa_3}, \vee}{\{\kappa_2<\kappa_1<\lambda\}:\phi^{\kappa_1}, \phi^{\kappa_2}, \phi^{\kappa_2}}{
      \vliq{}{\nu^{\kappa_4}, \vee}{\{\kappa_3<\kappa_2<\kappa_1<\lambda\}:\phi^{\kappa_1}, \phi^{\kappa_2}, \phi^{\kappa_3}, \phi^{\kappa_3}}{\vlhy{\vdots}  }
      }
  }
  }
}
}
=\der^{\emptyset\mid\vec \star}
\end{array}
\]
where $\phi \dfn t \in \nu X x (x \in X \vee x \in X)$, $t \in \phi^\kappa \dfn \nu^\kappa X x (x \in X \vee x \in X)$, the principal formulas are depicted in $\blue{blue}$,  and double lines mean multiple rule steps. It is easy to see that  $\der$ is a progressing proof, as the $\blue{blue}$ formulas form a progressing trace along its only infinite branch. 
{However, even though the translation applies the reset rule whenever possible, $\der^{\emptyset\mid \vec \star}$ does not contain any reset step because all the ordinal variables in the constraint also appear in the sequent.
Notice, that this counterexample crucially relies on the fact that $\der$ has sequents with unbounded size.
}

The example above  justifies the restriction of  progressing proofs to those with \emph{finitely} many sequents in~\autoref{lem:preservation-of-finiteness} and~\autoref{lem:preservation-of-progressivity}. Clearly, this restriction can be dropped  when progressing proofs are cyclic, as it is a consequence of regularity of the prooftree.

We start with a preliminary lemma that will be used to bound the number of ordinal variables in the constraints of a translated coderivations.

 \begin{lem}\label{lem:size-bound-on-prune-and-reset-free-sequent-constraint}
 Let $\constr : \Gamma $ be an extended sequent that cannot be conclusion of a reset rule nor of a pruning rule. Then $2N \geq | \constr | $, where   $N$ is the number of ordinal variables in $\Gamma$.  
\end{lem}

\begin{proof}
    We will define an injection $f : \{ \beta \in \constr \ \mid \ \beta \text{ does not occur in $\Gamma$} \} \rightarrow \{ \gamma \in \constr \ \mid \ \gamma \text{ does occur in $\Gamma$} \}$. From this it follows that $| \constr | = |\{ \beta \in \constr \ \mid \ \beta \text{ does not occur in $\Gamma$} \}| + | \{ \gamma \in \constr \ : \ \gamma \text{ does occur in $\Gamma$} \} | \leq N +N$. 
    
    Say $\beta \in \constr$ does not occur in $\Gamma$. Since $\constr : \Gamma $ cannot be conclusion of a  pruning rule, $\beta$ cannot be a leaf. But then,  since $\constr : \Gamma $ cannot be conclusion of a reset rule, at least one of the children of $\beta$ must occur in $\Gamma$. Let $\gamma$ be one of these, and let us set $f(\beta)\dfn \gamma$. Note that since $\constr$ is a forest we must have that $f(\beta) \neq f(\beta')$ when $\beta \neq \beta'$.  
\end{proof}

\begin{lem}\label{lem:preservation-of-finiteness}
     Let $\der$ be a progressing proof  with finitely many sequents. Then $\der^{\emptyset \mid \vec{\star}}$ has {only finitely many extended sequents.} 
\end{lem}
\begin{proof}
We first show that $\der^{\emptyset \mid \vec{\star}}$ contains only finitely many ordinal variables.  Since  $\der$ has only finitely many sequents, there is a bound on the number of set terms that occur in a sequent of $\der$, and so there is a bound on the number of ordinal variables that can occur in an annotated sequent of $\der^{\emptyset \mid \vec{\star}}$. Denote this last bound by $N$.    We now   show that each $\xi_m$ with $m \geq 2N+1$ cannot appear in $\der^{\emptyset \mid \vec{\star}}$. Assume towards contradiction that there is such a  $\xi_m$ in $\der^{\emptyset \mid \vec{\star}}$. By inspecting~\autoref{fig:mupaord rules-extended} there must be in $\der^{\emptyset \mid \vec{\star}}$ a rule step of one of the following forms: 
 \[
 \vliqf{\rp}{}{\seqcon : \Gamma, t \in \nu X x \phi }{
    \vlinf{ \nu}{}{\seqcon^\rp : \Gamma, t \in \nu X x \phi  }{\seqcon^\rp + \xi_m:  \Gamma, t \in \nu^{\xi_m} X x \phi } 
    }
    \qquad 
    \vliqf{\rp}{}{\seqcon : \Gamma, t \in \nu^\kappa X x \phi }{
\vlinf{ \nu^{\kappa}}{}{\seqcon^\rp : \Gamma, t \in \nu^{\kappa} X x \phi  }{\seqcon +_{\kappa} \xi_m : \Gamma, \phi[\nu^{\xi_m} X  x \phi/X, t/x]} 
}
\]
and so $\xi_m\not \in \constr^\rp$. Now, by definition of the translation,  the rules $\nu$ and  $\nu^\kappa$  pick from the enumeration $\xi_0 , \xi_1 \hdots $ the ordinal variable with least index that does not appear in the constraint (\autoref{defn:translation}.\ref{conv1}). This means that $\constr^\rp$ must contain every $\xi_i$ for $i < m$ and so $| \constr^\rp | \geq m \geq 2N+1$. On the other hand, $\constr^\rp$ cannot be in the conclusion of a reset or a pruning rule (\autoref{defn:translation}.\ref{conv2}). By~\autoref{lem:size-bound-on-prune-and-reset-free-sequent-constraint} this implies $2N \geq | \constr^\rp |$. A contradiction. 

We now show that $\der^{\emptyset \mid \vec{\star}}$ has also finitely many extended sequents. Since it contains only finitely many ordinal variables, by combinatorics it also contains only finitely many constraints. Moreover, since every extended sequent $\constr: \Gamma$ of $\der^{\emptyset \mid \vec{\star}}$  is such that $\Gamma^-$ is a sequent of $\der$, which are only finitely many by assumption,  and every ordinal variable occurring in the annotated sequent $\Gamma$ must occur in $\constr$, we conclude that there can be only finitely many annotated (hence extended) sequents in $\der^{\emptyset \mid \vec{\star}}$.
\end{proof}

\begin{lem}\label{lem:preservation-of-progressivity}
    Let $\der$ be an progressing proof  with finitely many sequents. Then  $\der^{\emptyset \mid \vec{\star}}$ is a  reset proof. 
\end{lem}
\begin{proof}
 Let $\mathcal{D}$ be a $\cmuPA$ coderivation with conclusion $\phi$. Let $\branch= (\constr_i: \Gamma_i)_{i \geq 0}$ be an infinite branch in $\der^{\emptyset \mid \vec{\star}}$. By~\autoref{lem:branch-to-branch} there exists a unique infinite branch $\widehat{\branch}= (\Delta_i)_{i \geq 0}$ of $\der$ that translates into $\branch$. By hypothesis, $\widehat{\branch}$ has a progressing trace $\widehat \tau=(\phi_i)_{i \geq i_0}$ for some $i_0\geq 0$ and a $\nu$-term $\nu X x. \phi$ that is the smallest infinitely progressing set-term in $\widehat{\tau}$. Again, by~\autoref{lem:branch-to-branch}   there is a progressing trace $\tau=(\psi_i)_{i \geq i_1}$ of $\branch$ such that $\tau^-\dfn (\psi_i^-)_{i \geq i_1}$ enumerates $(\phi_i)_{i \geq i_0}$ (possibly with  repetitions), and $\nu Xx.\phi$ is the smallest infinitely progressing set-term in $\tau^-$.  {Now, since $\tau^-$ is a progressing trace (and by~\autoref{prop:equivalent-trace-condition})}, there is $i_2 \geq i_1$ and a  sequence of ordinal variables $(\kappa_i)_{i \geq i_2}$  such that for all $i \geq i_2$:

\begin{enumerate}[(i)]
\item  \label{enum:obs1} 
$\nu^{\kappa_i} Xx.\phi$ occurs in $\psi_i$.
\item \label{enum:obs2}  
 Either $\kappa_{i+1} = \kappa_i$ or $\kappa_{i+1}$ is the child of $ \kappa_i$ in $\constr_{i+1}$ (i.e., in any case, $\kappa_i \geq_{\constr_{i+1}} \kappa_{i+1}$).
\item \label{enum:obs3}
There exists $j \geq i$ s.t. $\kappa_{j+1}$ is the child of $\kappa_{j}$ in $\constr_{j+1}$.
\end{enumerate}
Note that  $K \dfn\{\kappa_i \mid i \geq i_2 \}$ is a finite set  by~\autoref{lem:preservation-of-finiteness}. 

Let us show that $\branch$ has a stable variable. We  set $A_i\dfn \{\lambda\in \constr_i \mid  \constr_i \vdash \lambda \geq \kappa_i\}$ for  $i \geq i_2$. By definition of constraint, $A_i$ is a (finite) linear order for every $i \geq i_2$. Moreover, $A_i$ is non-empty since $\kappa_i \in A_i$ by~\autoref{enum:obs1}. This means that there is  a maximal element $\theta \in A_{i_2}$. Note that $\theta$ is also maximal in $\constr_{i_2}$. {We now show by induction that $\theta$ occurs in every $A_i$ for $i \geq i_{i_2}$ and so $\theta$ is a stable  variable. Let $i \geq i_{2}$ and assume $\theta \in A_i$.  Since $\theta$ doesn't have a parent in $\constr_{i_2}$ and the rules along $\branch$ cannot introduce a parent of $\theta$, we see that $\theta$ can never be removed by a reset rule. Since $\theta \in A_i$ we have $\constr_i \vdash \theta \geq \kappa_i$ and so by ~\autoref{enum:obs1} $\theta$ can't be pruned either. Thus $\theta \in \constr_{i+1}$ and by ~\autoref{enum:obs2} we have $\theta \geq_{ \constr_{i+1} } \kappa_{i} \geq_{ \constr_{i+1} } \kappa_{i+1}$ and so $\theta \in A_{i+1}$. }

{Notice that since $K$ is finite we can pick $i_2$ large enough such that every stable variable of $\branch$ occurs in every $\constr_i$ with $i \geq i_2$. }
Now, let $\theta^*$ be a minimal stable variable in  $A_{i_2}$, which exists because $A_{i_2}$ is a finite linear order that contains a stable variable. We show that $\theta^*$ is also a reset variable, i.e.,  for every ${i_3} \geq i_2$ there is $l \geq {i_3}$ such that  $\rs {\theta^*}$ is the rule with conclusion $\constr_l:\Gamma_l$ in $\branch$. 

First, by~\autoref{enum:obs3} there exists $j \geq {i_3}$ s.t. $\kappa_{j+1}$ is the child of $\kappa_{j}$ in $\constr_{j+1}$.
{Note that, for all $i_3 \leq i \leq j+1$, since $\theta^*$ is stable and so can't be removed, if $\theta^* \in A_i$ then $\theta^* \in \constr_{i+1}$ and so by \autoref{enum:obs2} $\theta^* \geq_{ \constr_{i+1} } \kappa_{i} \geq_{ \constr_{i+1} } \kappa_{i+1}$ we have $\theta \in A_{i+1}$. By induction we conclude that $\theta^* \in A_{j}$ and so $\theta^* \geq_{\constr_{j}} \kappa_{j} $. Furthermore, since also $\theta^* \in A_{j+1}$, we have $\theta^* \geq_{\constr_{j+1}} \kappa_{j} >_{\constr_{j+1}} \kappa_{j+1}$ and so $\theta^*$ has a child $\lambda \in A_{j+1}$.} 

Since $\theta^* > \lambda$ we have by our choice of $i_2$ and minimality of $\theta^*$ that $\lambda$ cannot be stable. This means that there exists $l \geq j$ such that $\lambda$ is removed by a pruning  or a reset rule $\infrule$ with conclusion $\constr_{l}:\Gamma_l$ in $\branch$, and we consider the least such $l$.
{For all $j+1 \leq i < l$, if $\lambda \in A_i$ then by assumption $\lambda \in \constr_{i+1}$ and so by \autoref{enum:obs2} $\lambda \geq_{ \constr_{i+1} } \kappa_{i} \geq_{ \constr_{i+1} } \kappa_{i+1}$ we have $\theta \in A_{i+1}$. Thus, by induction, we get $\lambda\in A_{l}$. By ~\autoref{enum:obs1} and $\lambda \geq_{\constr_l} \kappa_l$ we see that $\infrule$ cannot be the pruning rule, which means $\infrule= \rs{\chi}$ for some $\chi$. Finally, since $ \theta^* $ is stable we have that $ \theta^* \in \constr_l $. Since no rule along $ \branch $ can introduce a parent of an ordinal variable that already occurs in a constraint, we see that $ \theta^* $ is the parent of $ \lambda $ in $ \constr_l $. Moreover, since parents are unique, it must be that $ \chi=\theta^* $.}

\end{proof}

We can now prove the fundamental result of this section:

\begin{thm}\label{thm:translation} If $\cmuPA \vdash \phi$ then $\cmuPAord \vdash \emptyset:\phi$.  
\end{thm}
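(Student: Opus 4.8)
The plan is to read the theorem off from the two preservation lemmas, once we observe that a cyclic progressing proof automatically meets the finiteness hypothesis they require. So first I would unpack the hypothesis: if $\cmuPA \vdash \phi$ then there is a cyclic progressing proof $\der$ of $\phi$. By \Cref{def:preproofs} a cyclic coderivation has only finitely many sub-coderivations, and since the conclusion of each sub-coderivation is a sequent of $\der$, it follows that $\der$ has only finitely many distinct sequents. Hence $\der$ is a progressing proof with finitely many sequents, and both \Cref{lem:preservation-of-finiteness} and \Cref{lem:preservation-of-progressivity} apply to it.

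Next I would assemble the annotated coderivation $\transl{\der}{\emptyset}{\vec{\star}}$. Its conclusion is $\emptyset : \phi^{\vec{\star}}$, and since the identity annotation leaves every $\nu$ unannotated, $\phi^{\vec{\star}}$ is just $\phi$ viewed as a formula of $\langmuarithord$; thus the conclusion is $\emptyset : \phi$, exactly as required (and $\emptyset : \phi^{\vec{\star}}$ is a legitimate extended sequent, since $\phi^{\vec{\star}}$ contains no ordinal variables). By \Cref{lem:preservation-of-progressivity}, $\transl{\der}{\emptyset}{\vec{\star}}$ is a reset proof. It therefore only remains to show that this annotated coderivation is \emph{cyclic}, after which it is a cyclic reset proof of $\emptyset : \phi$ and we are done.

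The main obstacle is precisely this cyclicity step, because \Cref{lem:preservation-of-finiteness} supplies only finitely many extended sequents, and a coderivation is not determined by its conclusion alone. I would argue via determinism of the translation. Thanks to \Cref{defn:translation}.\ref{conv1} (the fixed enumeration of $\mathcal{OV}$) and \Cref{defn:translation}.\ref{conv2} (exhaustive application of $\rs \kappa$ and $\pr$), the translation in \Cref{fig:transl-1,fig:transl-2} is a deterministic operation, so each $(\der_u)^{\constr\mid\vec\ell}$ is a function of the triple $(\der_u,\constr,\vec\ell)$. Each single rule of $\der$ is unfolded into a block of boundedly many annotated rule steps whose leaves are again translations $(\der_{u'})^{\constr'\mid\vec\ell'}$ of immediate sub-coderivations of $\der_u$; hence every sub-coderivation of $\transl{\der}{\emptyset}{\vec{\star}}$ occurs at a position of bounded depth inside such a block rooted at some $(\der_u)^{\constr\mid\vec\ell}$. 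Now $\der_u$ ranges over the finitely many sub-coderivations of $\der$, while the pair $(\constr,\vec\ell)$ is recoverable from the extended conclusion $\constr : (\Gamma_u)^{\vec\ell}$ of the block-root (read $\constr$ off directly, and $\vec\ell$ from the annotations of the sequent), of which there are only finitely many by \Cref{lem:preservation-of-finiteness}. Consequently the block-roots, and with them all sub-coderivations of $\transl{\der}{\emptyset}{\vec{\star}}$, are finite in number, so $\transl{\der}{\emptyset}{\vec{\star}}$ is cyclic. Combined with the reset property this yields a cyclic reset proof of $\emptyset : \phi$, i.e.\ $\cmuPAord \vdash \emptyset : \phi$.
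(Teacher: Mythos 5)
Your proof is correct, and up to the last step it follows the paper's own decomposition: regularity of $\der$ gives finitely many sequents, the translation $\der^{\emptyset\mid\vec{\star}}$ has conclusion $\emptyset:\phi^{\vec{\star}}=\emptyset:\phi$, and \Cref{lem:preservation-of-progressivity} gives the reset condition. The genuine difference is how you obtain cyclicity. The paper argues branch by branch: an infinite branch $(v_j)_j$ of $\der^{\emptyset\mid\vec{\star}}$ induces a branch $(u_i)_i$ of $\der$ with $(\der^{\emptyset\mid\vec{\star}})_{v_{f(i)}}=\der_{u_i}^{\constr_{f(i)}\mid\ell_{f(i)}}$; regularity of $\der$ yields an infinite set of indices on which the $\der_{u_i}$ coincide, and \Cref{lem:preservation-of-finiteness} lets one pigeonhole the finitely many pairs $(\constr_{f(i)},\ell_{f(i)})$, so every infinite branch contains two nodes rooting identical sub-coderivations. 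You instead count sub-coderivations globally: every sub-coderivation of the translation sits at bounded offset inside a block rooted at some $(\der_u)^{\constr\mid\vec{\ell}}$, and the triple $(\der_u,\constr,\vec{\ell})$ ranges over a finite set --- $\der_u$ by regularity of $\der$, and $(\constr,\vec{\ell})$ because it is recoverable from the block-root's extended conclusion, of which there are finitely many by \Cref{lem:preservation-of-finiteness} --- so determinism of the translation gives finitely many distinct sub-coderivations. This is arguably the tighter route relative to \Cref{def:preproofs}: the paper's final inference, from ``every infinite branch contains a repetition'' to ``finitely many sub-coderivations'', is true for finitely branching coderivations but not immediate; it needs a further argument (e.g.\ K\"onig's lemma applied to the quotient of nodes under equality of sub-coderivations), which the paper leaves implicit. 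Your count hits the definition of cyclic directly, at the mild price of making explicit the block structure and the determinism of the translation (\Cref{defn:translation}.\ref{conv1}--\ref{conv2}) --- though the paper's chain of equalities $\der_{u_i}^{\constr_{f(i)}\mid\ell_{f(i)}}=\der_{u_{i'}}^{\constr_{f(i')}\mid\ell_{f(i')}}$ tacitly relies on that same determinism. Both arguments use exactly the same two lemmas, so neither is stronger in substance; yours is the more self-contained relative to the paper's stated definition of cyclicity.
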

\begin{proof}

Let $\der$ be a cyclic progressing proof of $\phi$. Then the translation $\der^{\emptyset \mid \vec \star}$ proves $\emptyset: \phi^{\vec \star}$, i.e. $\emptyset: \phi$. Since a cyclic progressing proof has finitely many sequents, then by~\autoref{lem:preservation-of-progressivity} $\der^{\emptyset \mid \vec \star}$ is a reset proof with finitely many extended sequents.  

We now prove that  $\der^{\emptyset \mid \vec \star}$ is cyclic. Let $\branch=(v_j)_{j \geq 0}$ be an infinite branch of $\der^{\emptyset \mid \vec \star}$ consisting of extended sequents $( \constr_j : {\Gamma_j}^{\ell_{j}} )_{j \geq 0}$. By inspecting the translation, there is a (unique) branch $\branch'=(u_i)_{i \geq 0}$ of $\der$ s.t. for some strictly monotone $f: \mathbb{N}\to \mathbb{N}$ we have

$$ \der_{u_i}^{ \constr_{f(i)} \mid \ell_{f(i)}  }  = (\der^{\emptyset \mid \vec \star})_{v_{f(i)}} .$$

Since $\der$ is regular we can find an infinite $I \subseteq \Nat$ s.t. $\der_{u_i} = \der_{u_{i'}}$ for $i,i' \in I$. 

By~\autoref{lem:preservation-of-progressivity} there are only finitely many distinct extended sequents in $\branch$ and so there can only be finitely many distinct $\constr_j$ and $\ell_j$. Thus we can find $i,i' \in I$ with $i \neq i'$ and s.t. $\constr_i =  \constr_{i'}$ and $\ell_i = \ell_{i'}$. But then

$$ (\der^{\emptyset \mid \vec \star})_{v_{f(i)}} =
\der_{u_i}^{ \constr_{f(i)} \mid \ell_{f(i)}  }  =  \der_{u_{i'}}^{ \constr_{f(i')} \mid \ell_{f(i')}  }  = (\der^{\emptyset \mid \vec \star})_{v_{f(i')}} $$

Since $i \neq i'$ and $f$ is strictly monotone we have $f(i) \neq f(i')$ and so each infinite branch $\branch=(v_j)_j$ of $\der^{\emptyset\mid \vec \star}$ has two distinct nodes that are roots of identical sub-derivations, i.e. $\der^{\emptyset\mid \vec \star}$ is cyclic.

\end{proof}

Soundness for progressing proofs follows as a straightforward corollary:

\begin{cor}[Soundness for $\cmuPA$]\label{thm:soundness-nwmuPA}
    Let $\der\in \cmuPA$ with conclusion $\Gamma$. Then,  $\N \models \Gamma$. 
\end{cor}
\begin{proof}
    By~\autoref{thm:translation} we have that $\der^{\emptyset \mid \vec \star}$ is a reset proof of $\emptyset: \Gamma^{\vec \star}=\Gamma$. By~\autoref{thm:Graham-soundness} we have $\N \models \emptyset: \Gamma$, and so $\N \models \Gamma$.
\end{proof}

\section{Formalising the soundness argument in $\PSCA$}\label{sec 6}

The final section of this paper is devoted to proving the converse of \autoref{id-to-cid}.
To this end,  taking  inspiration from previous work \cite{Simpson17:cyc-arith,Das20:ca-log-comp}, we use second-order theories to formalise the soundness theorem for our reset system $\cmuPAord$  (\autoref{thm:Graham-soundness}),  and then we appeal to conservativity results in~\autoref{thm:moellerfeld-psca-cons-over-muPA}. 

We first recall fragments of  ``second-order" arithmetic and some reverse mathematics of fixed point theorems developed in~\cite{Curzi023}. Then, we proceed with arithmetising the soundness argument.

\subsection{Language  and theories of second-order arithmetic}
Let us recall
$\langsoarith$, the language of second-order arithmetic, e.g.\ as given in \cite{Simpson99:monograph}.
It extends the language of arithmetic $\langarith$ by:
\begin{itemize}
    \item an additional sort of \emph{sets}, whose variables are written $X,Y $ etc. Individuals of $\langarith$ are considered of \emph{number} sort.
    \item an \emph{elementhood} (or \emph{application}) relation $\in$ relating the number sort to the set sort. I.e.\ there are (atomic) formulas $t \in X$ when $t $ is a number term and $X$ is a set variable. 
\end{itemize}

When speaking about the free variables of a formula, we mean both set variables and individual variables.

We shall assume a De Morgan basis of connectives, namely $\lor, \land, \exists, \forall$ with negation only on atomic formulas.
Hence, we say that a formula $\phi$ is \emph{positive} (or \emph{negative}) in $X$ if no (every, resp.) subformula $t \in X$ occurs under a negation.

We will use the term $\Sij 1 k$ to denote formulas provably equivalent under $\PSCA$ to a formula of the form $\exists X_0 \forall X_1 \hdots Q_n \phi$ with $\phi$ arithmetical. Similarly for $\Pij 1 k $ and $\Dij 1 k$. 

We use the notation $\langle - \rangle$ for the code of a tuple. If the tuple contains a set then $\langle - \rangle$ will be a second-order object. 

For any set $F$ we can consider $F$ as a function $F: \Nat \rightarrow SET$ by setting $F(x):= \{ y \ | \ \langle x , y \rangle \in F \}$. If $C$ is a collection of sets defined by a formula $\phi(X)$ then we write $F: A \rightarrow C$ for a $\Dij 1 1 ( \phi )$ formula expressing that, for all $a \in A$, we have $F(a) \in C$. In general, for any formula $\psi(Y,x)$ we have that $\psi(F(x),x)$ is $\Dij 1 1 (\psi)$, and so we can use $F(x)$ freely without worrying about increasing second-order quantifier complexity. 

We shall work with common subsystems of second-order arithmetic, as found in textbooks such as \cite{Simpson99:monograph}, and assume basic facts about them.

In particular, recall that $\ACA$ is a two-sorted extension of basic arithmetic by:

\begin{itemize}\setlength\itemsep{0.1cm}
    \item \emph{Arithmetical comprehension.} $\exists X \forall x (X(x) \liff \phi(x))$ for each arithmetical formula $\phi(x)$.
    \item \emph{Set induction.} $\forall X (X(0) \limp \forall x (X(x) \limp X(\succ x)) \limp \forall x X(x)) $
\end{itemize}

From here, 
$\PSCA$ is the extension of $\ACA$ by the comprehension schema for all $\Pij 1 2 $ formulas.
We shall freely use established principles of $\PSCA$ (as seen, e.g., in \cite{Simpson99:monograph}) such as $\Sij 1 2 $-comprehension, $\Sij 1 2$-choice and $\Sij 1 2 $-dependent choice

\begin{equation}\label{eqn:axiom-of-choice}
    (\text{$\Sij 1 2 $-choice}) \ \ \ \forall x \exists Y \phi(x,Y) \leftrightarrow \exists F: \Nat \rightarrow SET \ \forall x \phi(x , F(x)) \ \ \ \phi \in \Sij 1 2
\end{equation} 

\begin{equation}\label{eqn:dependent-choice}
    (\text{$\Sij 1 2 $-dependent choice}) \ \ \ \forall X \exists Y \phi(X,Y) \implies \exists F: \Nat \rightarrow SET \ \forall n \phi(F(n) , F(n+1)) \ \ \ \phi \in \Sij 1 2
\end{equation}



    

For the formalised soundness argument we will need that $\Dij 1 2$ formulas are closed under positive arithmetical combinations (over $\PSCA$).

\begin{thm} \label{thm:arithm-comb-of-delta12}

Let $\psi(\phi_0, \hdots \phi_n)$ be a positive arithmetical combination of $\phi_0, \hdots \phi_n$. If $\phi_0, \hdots \phi_n \in \Sij 1 2$ (resp., $\Pij 1 2$) then $\psi(\phi_0, \hdots \phi_n) \in \Sij 1 2$ (resp., $\Pij 1 2$).

\end{thm}

\begin{proof}
We will just prove the result for $\Sij 1 2$, the proof for $\Pij 1 2$ is very similar.

We proceed by induction on the structure of $\psi$. 

The cases where $\psi$ is atomic or where $\psi = \phi_i$ for some $i \leq n$ are obvious.

Assume $\psi(\vec \phi) = \psi_0(\vec \phi) \land \psi_1(\vec \phi)$. By inductive hypothesis we can write $\psi_0(\vec \phi) = \exists X_0 \forall X_1 \psi'_0$ and $\psi_1(\vec \phi) = \exists Y_0 \forall Y_1 \psi'_1$ with $\psi_0'$ $ \psi_1'$ arithmetical. Then by standard prenexing we get 

$$\PSCA \proves \exists X_0 \forall X_1 \psi'_0 \land  \exists Y_0 \forall Y_1 \psi'_1 \liff \exists X_0 \exists Y_0 \forall X_1 \forall Y_1 ( \psi'_0 \land    \psi'_1)$$

The $\lor$ case is identical.

Assume $\psi(\vec \phi) = \forall x \psi_0(\vec \phi)$. By inductive hypothesis we have $\psi_0(\vec \phi) = \exists X_0 \forall X_1 \psi'_0$ with $\psi'_0$ arithmetical. Using \autoref{eqn:axiom-of-choice} for the first equivalence and pure logic for the second we get: 

$$\PSCA \proves \forall x \exists X_0 \forall X_1 \psi'_0(X_0) \liff \exists F \forall x \forall X_1 \psi'_0(F(x))  \liff \exists F \forall X_1 \forall x  \psi'_0(F(x))$$

with $\psi'_0(F(x) )\in \Dij 1 1$. 

The case of $\exists$ is similar. 
\end{proof}

\subsection{Countable orders in $\PSCA$} \label{subsection:formalized-ordinals}

A basic theory of (countable) ordinals can be developed  even within weak second-order theories, as shown in [Sim99] and also comprehensively surveyed in [Hir05]. In this paper, however, our primary focus is on $\PSCA$. Therefore, we will omit specifying the weakest theories in which each result can be formalised.

A (countable) binary relation is a pair $(X, \leq)$ where $X$ and $\leq$ are sets, the latter construed as ranging over pairs. We say that $(X, \leq)$ is a partial order, written $PO(X, \leq)$, if:

\begin{itemize}
    \item $\forall x \in X x \leq x$
    \item $\forall x, y \in X(x \leq y \leq x \rightarrow x = x)$
    \item $\forall x, y, z(x \leq y \leq z \rightarrow x \leq z)$
\end{itemize}

$(X, \leq)$ is a total order, written $TO(X, \leq)$, if it is a partial order that is total, that is,
\begin{itemize}
    \item $ \forall x, y(x \leq y \lor y \leq x)$
\end{itemize}

Given a relation $\leq$, we may write $<$ for its strict version, given by
$x < y := x \leq y \land \neg y \leq x$.

We employ similar notational conventions for other related order-theoretic binary symbols. 
We say that a binary relation $(X, \leq)$ is well-founded, written $WF(X, \leq)$, if:
\begin{itemize}
    \item $\forall f : \Nat \rightarrow X \exists x \neg f (x + 1) < f (x)$
\end{itemize}

$(X, \leq)$ is a well-order, written $WO(X, \leq)$, if it is a well-founded total order, i.e.:

$WO(X, \leq) := TO(X, \leq) \land WF(X, \leq)$

Henceforth we shall write $\alpha$, $\beta$ etc. to range over countable binary relations. If $\alpha = (X, \leq)$ we may write $x \leq_{\alpha} y := x, y \in X \land x \leq y$, and similarly $x <_{\alpha} y$ for $x, y \in X \land x < y$. We may also write simply $x \in  \alpha$ instead of $x \in  X$, as abuse of notation.

It is not hard to see that $\mathsf{ACA}_0$ admits an induction principle over any provable well-order (see, e.g., [Sim99]):
\begin{equation}
    \label{eq:transf-ind}
    \ \mathsf{ACA}_0 \proves 
    \quad WO(\alpha) \rightarrow  \
    \forall X(\forall x \in  \alpha(\forall y <_{\alpha} x  
    X(y) \rightarrow X(x)) \rightarrow \forall x \in  \alpha X(x))
\end{equation}

And so, in particular, in $\PSCA$ we have transfinite induction on arithmetical combinations of $\Pij 1 2 $-formulas (using $\Pij 1 2$-comprehension). 

Following Simpson in [Sim99], given $\alpha$, $\beta \in  WO $ we write $\alpha \prec  \beta$ if there is an order-isomorphism from $ \alpha$ onto a proper initial segment of $\beta$. We also write $\alpha \approx   \beta$ if $\alpha$ and $\beta$ are order-isomorphic, and $\alpha \preceq    \beta$ if $\alpha \prec  \beta \lor \alpha \approx   \beta$. Thanks to the uniqueness of comparison maps, we crucially have

\begin{prop}
$\prec$, $ \preceq$ and $\approx$ are all provably $\Dij 1 1$ in $\PSCA$. 
\end{prop} 

We shall now state a number of well-known facts about comparison, all of which may be found in, e.g., [Sim99] or [Hir05].

\begin{prop} \label{prop:props-ordinal-comparison}
Let $\alpha$, $\beta$, $\gamma$ be well-orders. We have the following in $\PSCA$:

\begin{enumerate}
    \item (Comparison is a preorder)
    \begin{enumerate}
        \item $\alpha \preceq    \alpha$
        \item $\alpha \preceq    \beta \preceq    \gamma     \rightarrow \alpha \preceq    \gamma$
    \end{enumerate}
    \item (Comparison is pseudo-antisymmetric) If $\alpha \preceq    \beta \preceq    \alpha$ then $\alpha \approx   \beta$.
    \item (Comparison is total) $\alpha \preceq    \beta \lor \beta \preceq    \alpha$
    \item (Comparison is well-founded) 
    
    $\forall F : \Nat \rightarrow WO \ (( \forall x \ \exists y > x  \ F (y + 1) \prec  F (y) ) \rightarrow \exists x  ( F (x) \prec  F (x+1) ))$   
\end{enumerate}

\end{prop} 
 
We shall assume basic ordinal existence principles, in particular constructions for successor ($\succ \alpha$), addition ($\alpha + \beta$) and maximum ($\max(\alpha, \beta)$), initial segments ($\alpha_b$ for $b \in  \alpha$), all definable and satisfying characteristic properties provably in $\PSCA$.

\subsection{Knaster-Tarski Theorem and approximants} \label{subsection:formalized-knaster-tarski}

We have two ways of interpreting the $\mu$- and $\nu$-binders in second-order arithmetic. The first we might call the ``impredicative way'' given by:

\begin{equation}
    \label{eq:mu-in-SO}
    t \in \mu  X x . \phi 
    \quad := \quad
    \forall X (\forall x (\phi(X,x) \limp x \in X) \limp t \in X)
\end{equation} 
\begin{equation}
    \label{eq:nu-in-SO}
    t \in \nu  X x .\phi 
    \quad := \quad
    \exists X (\forall x (x \in X \rightarrow \phi(X,x) ) \land t \in X)
\end{equation}

Another way of interpreting $\mu$ and $\nu$ is through approximants:

\begin{defi}[Approximants]
We write  $t \in \mu^{\alpha} X x. \phi   $ for the following formula: 
\[   \exists F: \alpha \rightarrow SET \ \exists a \in \alpha \ (\forall b \in \alpha \ \forall y \ (y \in F(b)  \rightarrow \exists c <_\alpha b \ \phi (F(c) , y)) \land  t \in F(a) )
\]
and  $t \in \mu^{\WO} X x. \phi $ for $   \exists \alpha \in WO \ t \in \mu^{\alpha} X x. \phi$. Dually, we write $t \in \nu^{\alpha} X x. \phi $ for the formula
\[
  \forall F : \alpha \rightarrow SET  \ \forall a \in \alpha \ (\forall b \in \alpha \ \forall y \ ( \forall c <_\alpha b \ \phi (F(c) , y) \rightarrow  y \in F(b)  ) \rightarrow  t \in F(a) )
\]
and $t \in \nu^{\WO} X x. \phi $ for $   \forall \alpha \in WO \ t \in \nu^{\alpha} X x. \phi$.

     
      

\end{defi}

 For the remainder of this section, assume $\phi(X,x) \in \Dij 1 2$  and assume $\phi(X,x)$ is positive in $X$.  

\begin{lem}[Dualities] \label{lemma:pi12-proves-mu-nu-duality}
    $\PSCA$ proves 

    \begin{tabular}{lll}
        $t \in \mu  X x . \phi(X,x)$ & $\leftrightarrow$ & $t \notin \nu  X x . \neg \phi(\neg X,x) $\\
        $t \in \mu^{\alpha} X x . \phi(X,x)$ & $\leftrightarrow$ & $   t \notin \nu^{\alpha} X x . \neg \phi(\neg X,x) $ \\
       $t \in \mu^{WO} X x . \phi(X,x)$ & $\leftrightarrow$ & $   t \notin \nu^{WO} X x . \neg \phi(\neg X,x) $
    \end{tabular}

\end{lem}

\begin{proof}
    This is done merely by repeating the standard textbooks proofs within $\PSCA$ using pure logic. Note that no comprehension is needed. 
\end{proof}

The last results of this section are all proven in \cite{Curzi023} for $\mu$-terms (the dual properties for $\nu$-terms can  be obtained using \autoref{lemma:pi12-proves-mu-nu-duality}).

The first result shows that approximants are $\Dij 1 2$. This is shown by utilising the uniqueness of $F$ (under certain constraints). 

\begin{prop} \label{prop:delta12-characterization-of-approximants}

If $\phi(X,x) \in \Dij 1 2$ and $\PSCA \proves WO(\alpha)$ then $\mu^{\alpha} X x . \phi$ and $\nu^{\alpha} X x . \phi$ are in  $\Dij 1 2$

\end{prop}

The next result shows that $\PSCA$ has access to the recursive characterizations of $\mu$- and $\nu$-approximants respectively, i.e. 
\[
\begin{array}{c}
    \mu^{\alpha}Xx.\phi = \displaystyle \bigcup_{\beta < \alpha} \phi ( \mu^{\beta}Xx . \phi )\qquad \text{and} \qquad
    \nu^{\alpha}Xx.\phi  = \displaystyle \bigcap_{\beta < \alpha} \phi ( \nu^{\beta}Xx . \phi )
\end{array} 
\]

\begin{prop} [Recursion] \label{prop:formalized-approximant-recursion}

$\PSCA$ proves

\begin{tabular}{lll} 
    $t \in \mu^{\alpha} X x . \phi $ & $\leftrightarrow$ & $\exists \beta \prec \alpha \   \phi(\mu^{\beta} X x . \phi , t )$ \\
   $t \in \nu^{\alpha} X x . \phi $ & $\leftrightarrow$ & $\forall \beta \prec \alpha \   \phi ( \nu^{\beta} X x . \phi , t )$ 
\end{tabular}

\end{prop}

Next, $\PSCA$ needs to ``know'' that $\nu X x .\phi$ and $\mu X x .\phi$ are fixed points.

\begin{prop} [Knaster-Tarski] \label{prop:formalized-knaster-tarski}

$\PSCA$ proves 

\begin{tabular}{l}

  $\forall y \ ( y \in \mu X x . \phi(X,x) \leftrightarrow \phi (\mu X x . \phi,y  )$ \\
  $\forall y \ ( y \in \nu X x . \phi(X,x) \leftrightarrow \phi (\nu X x . \phi, y )$
\end{tabular}

\end{prop}

Finally, we can show that the two interpretations of $\mu$ and $\nu$ are provably equivalent in $\PSCA$. This result uses \autoref{prop:delta12-characterization-of-approximants}, \autoref{prop:formalized-approximant-recursion} and \autoref{prop:formalized-knaster-tarski}, as well as some additional preliminary results from \cite{Curzi023} that are not mentioned here. 

\begin{thm}[Fixed points characterisation]\label{thm:formalized-dual-characterization}

$\PSCA$ proves 

\begin{tabular}{l} 
  $\forall y ( y \in \mu  X x . \phi \leftrightarrow y \in \mu^{WO} X x . \phi ) $ \\
  $\forall y ( y \in \nu  X x . \phi \leftrightarrow y \in \nu^{WO} X x . \phi ) $
\end{tabular}

\end{thm}

By standard prenexing and $\Sij 1 2$-choice we see that $\mu X x . \phi \in \Pij 1 2$, and using standard prenexing and \autoref{prop:delta12-characterization-of-approximants} we see that $\mu^{WO} Xx.\phi \in \Sij 1 2$. Thus,  from \autoref{thm:formalized-dual-characterization} we have that both interpretations of $\mu$ are $\Dij 1 2$. The same goes for $\nu$. 

\begin{cor}
    
 \label{cor:delta12-characterization-of-mu-nu}
If $\phi(X,x) \in \Dij 1 2$ then $\mu  X x . \phi$, $\mu^{WO} X x . \phi$, $\nu  X x . \phi$ and $\nu^{WO} X x . \phi$ are all $\Dij 1 2$. 
\end{cor}

\subsection{Interpreting $\cmuPA$ and $\cmuPAord$ in $\PSCA$}

\label{sec:subsyss-so-arith}
\label{sec:ids-in-pca}

We may define an interpretation $(\_)^\dagger$ from $\cmuPA$ to $\PSCA$ defined by $(t \in \mu X x \phi)^\dagger\dfn t \in \mu  X x \phi$ and $(t \in \nu X x \phi)^\dagger\dfn t \in \nu  X x \phi$.

\begin{thm}\label{thm:image-of-muPA-translation-Delta12}
If $\phi$ is a formula of $\cmuPA$ then $\phi^\dagger$ is $\Dij 1 2$.
\end{thm} 

\begin{proof}

This follows from \autoref{cor:delta12-characterization-of-mu-nu} and \autoref{thm:arithm-comb-of-delta12}. 
\end{proof}

\autoref{thm:image-of-muPA-translation-Delta12} allows $\PSCA$ to readily verify the axioms of $\muPA$ using $\Pij 1 2$-comprehension, and so $\muPA$ can be considered a fragment of $\PSCA$. In fact, by a result of M\"ollerfeld's, $\PSCA$ not only extends $\muPA$ but does so conservatively.

\begin{thm}
    [Implied by \cite{moellerfeld02:gen-ind-dfns}]
    \label{thm:moellerfeld-psca-cons-over-muPA}
    $\PSCA$ is arithmetically conservative over $\muPA$.
\end{thm}

This is a nontrivial result in proof theory whose details we shall not recount.
We will use this theorem as a ``black box'' henceforth.

Similarly, we can define an interpretation $(\_)^\dagger$ from $\cmuPAord$ to $\PSCA$ defined by $(t \in \mu X x. \phi)^\dagger \dfn t \in \mu  X x .\phi$, $(t \in \nu X x .\phi)^\dagger\dfn t \in \nu^{WO}  X x .\phi$ and $(t \in \nu^{\kappa} .X x \phi )^\dagger \dfn t \in \nu^{\kappa}  X x .\phi$.
Again, we have:

\begin{thm}
    
\ \label{thm:image-of-muPAord-translation-Delta12}
If $\phi$ is a formula of $\cmuPAord$ then $\phi^\dagger$ is $\Dij 1 2$.

\end{thm}

\begin{proof}
    This follows from \autoref{prop:delta12-characterization-of-approximants}, \autoref{cor:delta12-characterization-of-mu-nu} and \autoref{thm:arithm-comb-of-delta12}. 
\end{proof}

\subsection{Formalizing $\muPAord$ satisfaction in $\PSCA$}

In what follows we will employ standard metamathematical notations and conventions for coding, e.g.\ we write $\code E$ for the G\"odel code of an expression $E$.
Also, when it is not ambiguous, we shall typically use the same notation for meta-level objects/operations and their object-level (manipulations on) codes, as a convenient abuse of notation.
As usual, a predicate language cannot contain it's own universal truth predicate for Tarskian reasons.
However we may define \emph{partial} truth predicates for fragments of the language, in particular, each of our partial satisfaction predicates will be relativized to a single $\cmuPAord$ proof. For the rest of this section, fix such a proof $\der$. 
Since only finitely many first-order variables, ordinal variables and set-terms occur in $\der$, we can assume that the first order variables, ordinal variables and set-terms occurring in $\der$ are among, respectively, $x_i$, $\kappa_j$, $\mu X_i x_i . \phi_i$, $\nu X_i x_i . \phi_i$, $\nu^{\kappa_j} X_i x_i . \phi_i$ and $X_i$ for $i,j$ less than some $k$. 

\begin{defi} \label{def:formalized-sat-pred}
    Let $\chi(z,Z, \vec A, \vec \alpha )$ be an $\langsoarith$ formula expressing the statement in \autoref{fig:formalized-sat-pred}. Since $Z$ occur positively in $\chi(z,Z, \vec A, \vec \alpha )$ we can define

$$ \sat{\der}(\rho,m , \vec A , \vec \alpha) := \quad \langle \rho,m \rangle \in \mu Z z. \chi(z,Z, \vec A, \vec \alpha )  $$
    
Note that by \autoref{thm:image-of-muPAord-translation-Delta12} we have that $\sat{\der}(\rho,m , \vec A , \vec \alpha ) \in \Dij 1 2$.
\end{defi}

Note that all formulas in $\der$ are assumed to be closed, and so all set-variables occur positively in any subformula of a formula in $\der$. For this reason there is no need  for a clause for $t \not \in X_i$ in \autoref{fig:formalized-sat-pred}.

Also note that for our formalized satisfaction predicate we are not using extended assignments $\rho$ that associate ordinals to ordinal variables. Instead, $\rho$ is a finite object (coded by a number) that just assigns numbers to first-order terms, while the ordinals assigned to ordinal variables are second-order parameters $\vec \alpha$ of the formula. Similarly, $\rho$ does not assign sets to set-terms, as this is treated separately with the second-order parameters $\vec A$. 

There are two main properties that we need our partial satisfaction predicate to satisfy. The first is reflection. 

\begin{thm}
    [Reflection]
    \label{reflection}
    
Let $\phi(\vec X, \vec x, \vec \kappa)$ be a subformula of a formula in $\der$. Then $\PSCA$ proves $\sat{\der}(\rho, \code{\phi(\vec X,\vec x,  \vec \kappa)}, \vec A,  \vec \alpha) \liff \phi(\vec A, \rho(\vec x), \vec \alpha)$.

\end{thm}

\begin{proof}
    This is proved by a meta level induction on the structure of $\phi$ using \autoref{fig:formalized-sat-pred}.
\end{proof}

\begin{figure}
    \centering
     \[ 
    \begin{array}{rl}
        & z=\langle \rho , m \rangle \quad \land \\
      (  &  m = \code{s=t} \land \rho(s) = \rho(t) \\
    \lor & m = \code{s \neq t} \land \rho(s) \neq \rho(t) \\
    \lor & m = \code{s<t} \land \rho ( s) <\rho(t)  \\
    \lor & m = \code{\neg (s<t) } \land \neg ( \rho ( s) < \rho(t) )  \\    
    \lor & m = \code{\phi \lor \psi} \land \left( \langle \rho,\code \phi \rangle \in Z \lor \langle \rho,\code \psi \rangle \in Z \right) \\
    \lor & m = \code{\phi \land \psi} \land \left( \langle \rho,\code \phi \rangle \in Z \land \langle \rho,\code \psi \rangle \in Z \right) \\
    \lor & m = \code{\exists x \phi} \land \exists n \   \langle \rho \{x\mapsto n\}, \code \phi \rangle \in Z \\
    \lor & m = \code{\forall x \phi} \land \forall n \ \langle \rho \{x\mapsto n\}, \code \phi \rangle \in Z \\
    \noalign{\smallskip}
    \lor & \bigvee\limits_{i=0}^k (m = \code{t \in \mu X_i x_i . \phi_i} \land \rho(t) \in  \mu Y_i y_i . \phi_i( \rho(x_0), \hdots y_i , \hdots \rho (x_k), A_0 , \hdots Y_i , \hdots A_k, \vec \alpha) )  \\
    \lor & \bigvee\limits_{i=0}^k (m = \code{t \in \nu X_i x_i . \phi_i} \land \rho(t) \in  \nu^{WO} Y_i y_i .\phi_i( \rho(x_0), \hdots y_i , \hdots \rho (x_k), A_0 , \hdots Y_i , \hdots A_k, \vec \alpha) ) \\
    \lor & \bigvee\limits_{i=0}^k \bigvee\limits_{j=0}^{k} (m = \code{t \in \nu^{\kappa_j} X_i x_i . \phi_i} \land \rho(t) \in  \nu^{\alpha_j} Y_i y_i . \phi_i( \rho(x_0), \hdots y_i , \hdots \rho (x_k), A_0 , \hdots Y_i , \hdots A_k, \vec \alpha) ) \\
    \lor & \bigvee\limits_{i=0}^k (m = \code{t \in X_i} \land \rho(t) \in A_i ) \quad ) \\
    \end{array}
    \]
    \caption{Construction of partial satisfaction predicate.}
    \label{fig:formalized-sat-pred}
\end{figure}

\begin{figure}
    \centering
     \[
    \begin{array}{llrl}
    
      & \forall  \vec \alpha \in WO \ \forall \rho, s, t , \phi, \psi, x , \vec A &  & \\

      & & & \\
        
  &   \sat{\der}(\rho, \code{s=t}  , \vec A, \vec \alpha) & \leftrightarrow   &  \rho(s) = \rho(t) \\
     
  &  \sat{\der}(\rho, \code{s \neq t} , \vec A, \vec \alpha) &  \leftrightarrow & \rho(s) \neq \rho(t) \\
    
  &  \sat{\der}(\rho,\code{s<t}, \vec A, \vec \alpha) &      \leftrightarrow & \rho ( s) <\rho(t)  \\
    
  &  \sat{\der}(\rho,\code{\neg (s<t) }, \vec A, \vec \alpha) &  \leftrightarrow & \neg ( \rho ( s) < \rho(t) )  \\ 
    
  &  \sat{\der}(\rho, \code{\phi \lor \psi} , \vec A, \vec \alpha) &   \leftrightarrow &  \sat{\der} (\rho,\code \phi, \vec A , \vec \alpha) \lor \sat{\der} (\rho,\code \psi , \vec A , \vec \alpha ) \\
    
  &  \sat{\der}(\rho, \code{\phi \land \psi} , \vec A, \vec \alpha) &  \leftrightarrow & \sat{\der} (\rho,\code \phi, \vec A , \vec \alpha) \land \sat{\der} (\rho,\code \psi, \vec A ,\vec \alpha) \\
    
   & \sat{\der}(\rho, \code{\exists x \phi} , \vec A, \vec \alpha) &   \leftrightarrow & \exists n\, \sat{\der} (\rho \{x\mapsto n\}, \code \phi, \vec A , \vec \alpha)\\
    
  &  \sat{\der}(\rho, \code{\forall x \phi} , \vec A, \vec \alpha) &  \leftrightarrow & \forall n\, \sat{\der} (\rho \{x\mapsto n\}, \code \phi, \vec A , \vec \alpha ) \\
    
    \noalign{\smallskip}
    
\bigwedge \limits_{i=0}^{k} &     \sat{\der}(\rho, \code{t \in \mu X_i x_i . \phi_i} , \vec A, \vec \alpha) &   \leftrightarrow & \rho(t) \in  \mu Y_i y_i . \sat{\der} (\rho\{ x_i \mapsto y_i \}, \code {\phi_i} , A_0 , \hdots Y_i , \hdots A_k , \vec \alpha ) ) \\
    
\bigwedge \limits_{i=0}^{k}  &  \sat{\der}(\rho, \code{t \in \nu X_i x_i . \phi_i} , \vec A, \vec \alpha) &   \leftrightarrow & \rho(t) \in  \nu^{WO} Y_i y_i . \sat{\der} (\rho\{ x_i \mapsto y_i \}, \code {\phi_i}, A_0 , \hdots Y_i , \hdots A_k , \vec \alpha ) ) \\
    
\bigwedge \limits_{i,j=0}^{k} &   \sat{\der}(\rho, \code{t \in \nu^{\kappa_j} X_i x_i . \phi_i} , \vec A, \vec \alpha) &   \leftrightarrow &  \rho(t) \in  \nu^{\alpha_j} Y_i y_i . \sat{\der} (\rho\{ x_i \mapsto y_i \}, \code {\phi_i} , A_0 , \hdots Y_i , \hdots A_k , \vec \alpha) ) \\
    
\bigwedge \limits_{i=0}^{k} &    \sat{\der}(\rho, \code{t \in X_i} , \vec A, \vec \alpha) &   \leftrightarrow &  \rho(t) \in A_i  \\
    \end{array}
    \]
    \caption{Inductive characterisation of the satisfaction predicate.}
    \label{fig:sat-characteristic-props}
\end{figure}




The second property we need is that our partial satisfaction predicate satisfies the proper inductive characterization. 

\begin{thm}
    [Formalised satisfaction]
    \label{theorem:formalised-satisfaction}

 $\PSCA$ proves the inductive characterization of $\sat{\der}(\rho,m, \vec A , \vec \alpha)$ given in \autoref{fig:sat-characteristic-props}.
\end{thm}

\begin{proof}
For the atomic cases, $\land$, $\lor$, $\exists$ and $\forall$ this follows straightforwardly from \autoref{fig:formalized-sat-pred} and \autoref{prop:formalized-knaster-tarski}. For the remaining cases we need \autoref{reflection}. 

Concerning the case of $\mu$-formulas, using \autoref{def:formalized-sat-pred} and \autoref{prop:formalized-knaster-tarski} it is clear that $\PSCA$ proves

$$\sat{\der}(\rho, \code{t \in \mu X_i x_i . \phi_i} , \vec A, \vec \alpha) \leftrightarrow \rho(t) \in  \mu Y_i y_i . \phi_i( \rho(x_0), \hdots y_i , \hdots \rho (x_k), A_0 , \hdots Y_i , \hdots A_k, \vec \alpha) $$

By \autoref{reflection} we have that $\PSCA$ proves

$$ \phi_i( \rho(x_0), \hdots y_i , \hdots \rho (x_k), A_0 , \hdots Y_i , \hdots A_k, \vec \alpha) \leftrightarrow  \sat{\der} (\rho\{ x_i \mapsto y_i \}, \code {\phi_i} , A_0 , \hdots Y_i , \hdots A_k , \vec \alpha ) $$

and so $\PSCA$ proves

\[
\def\arraystretch{1.2}
\begin{array}{c}
 \rho(t) \in  \mu Y_i y_i . \phi_i( \rho(x_0), \hdots y_i , \hdots \rho (x_k), A_0 , \hdots Y_i , \hdots A_k, \vec \alpha) \\
     \leftrightarrow\\
     \rho(t) \in  \mu Y_i y_i . \sat{\der} (\rho\{ x_i \mapsto y_i \}, \code {\phi_i} , A_0 , \hdots Y_i , \hdots A_k , \vec \alpha )
      
\end{array}    
\]

from which we can conclude. 

The remaining cases are proved similarly. 
\end{proof}

Recall that we write $\neg \phi$ for the De Morgan dual of $\phi$. 

\begin{prop}

    \label{prop:formalized-sat-respects-negation}
$\PSCA$ proves 
\[\forall  \vec \alpha \in WO \ \forall \rho, \phi , \vec A \ \neg  \sat{\der}(\rho, \code{\phi} , \vec A , \vec \alpha) \liff \sat{\der}(\rho, \code{ \neg \phi}, \vec A , \vec \alpha)\]

where $\phi$ is a subformula of a formula in $\der$. 

\end{prop}

\begin{proof}
    This is the usual proof by induction on the structure of $\phi$ carried out in $\PSCA$ using \autoref{theorem:formalised-satisfaction}.
\end{proof}

For the formalized soundness proof we also need to extend our satisfaction predicate $\sat{\der}$ so that it works with constraints and extended sequents. This is done in the expected way.

\subsection{Formalizing soundness for $\cmuPAord$}

In the last part of this section we show that $\PSCA$ can formalise soundness for $\cmuPAord$, i.e., the proof of \autoref{thm:Graham-soundness}. 

Recall that in \autoref{thm:Graham-soundness}, after showing the existence of the  branch witnessing falsity, we invoke the reset condition to extract a stable reset variable along the false branch and thereby get a contradiction. To mimic this step inside $\PSCA$ it is crucial that $\PSCA$ ``knows'' that the proof satisfies the reset condition. 

\begin{lem} \label{lemma:reset-condition-provable}
Let $\der$ be a $\cmuPAord$ proof. Then $\PSCA$ proves that for all branches $\branch$ in $\der$ there exists a stable reset variable $\kappa$ along $\branch$.    
\end{lem}

\begin{proof}

     The idea is standard and is covered in detail in \cite{Das20:ca-log-comp}. 

     It is provable in $\PSCA$ (and much weaker theories) that for a $\cmuPAord$ derivation $\der$, satisfying the reset condition is equivalent to language inclusion of two Büchi automata which can both be computed from $\der$. Since by \cite{KMPS19:buchi-rev-math} language inclusion for Büchi automata is decidable it follows that, if $\der$ satisfies the reset condition, then $\PSCA$ proves that fact. 
\end{proof}

\begin{thm}[Formalising soundness]\label{thm:formalisation-of-soundness}
   Let $\der$ be a $\cmuPAord$ proof with conclusion $\constr_{\der} : \Gamma_{\der}$. Then  \[\PSCA\vdash \forall \vec \alpha \in WO \ \forall \rho, \vec A  \ \sat{\der} (\rho , \code{ \constr_{\der} : \Gamma_{\der} } , \vec A, \vec \alpha)\]
\end{thm}

\begin{proof}

Note that, unless otherwise indicated, all reasoning is assumed to happen within $\PSCA$.

Assume for contradiction that $\neg \sat{\der} (\rho_{\der} , \code{\constr_{\der} : \Gamma_{\der}} , \vec A, \vec \alpha_{\der} ) $. As in \autoref{thm:Graham-soundness} we will show the existence of a false branch satisfying certain conditions. The key to this is \autoref{lemma:one-step-branch-lemma} which says that whenever we have a false extended sequent then we can find a premise which is false and satisfies the right conditions on ordinal variables. To state \autoref{lemma:one-step-branch-lemma} inside $\PSCA$ we define the following formulas:

\[
 \xi(X)\dfn 
\begin{pmatrix}
X = \langle \rho  , \code{\constr : \Gamma} , \vec \alpha \rangle \ \land \vec \alpha \in WO  \\
      \land \\
     \constr : \Gamma \text{ is an extended sequent in }\der \\
     \land \\
   \neg \sat{\der} (\rho, \code{\constr : \Gamma} , \vec A , \vec \alpha )
\end{pmatrix}
\]

 and 
 
\[
\chi(X,Y)\dfn 
\begin{pmatrix}
    X = \langle  \rho ,  \code{\constr : \Gamma} , \vec \alpha \rangle \ \land \vec \alpha \in WO \  \land Y = \langle \rho' , \code{\constr' : \Gamma'} , \vec \alpha' \rangle \ \land \vec \alpha' \in WO \  \\
    \land\\
    \constr' : \Gamma' \text{ is a premise of  }\constr : \Gamma \text{ in }\der\\
    \land \\
    \forall n \leq k \left( \kappa_n \in \constr \cap \constr' \rightarrow 
    ( ( \alpha_n' \preceq \alpha_n)  \land \text{(the rule is }\rs {\kappa_n} \rightarrow \alpha_n' \prec \alpha_n  ) )\right)
\end{pmatrix}
\]

Note that both $\xi(X)$ and $\chi(X,Y)$ are arithmetical in $\sat{\der}$, $\prec$ and $WO$, and so they are $\Dij 1 2$ by \autoref{thm:arithm-comb-of-delta12}.

Let $Z_{\der} = \langle \rho_{\der }, \code{  \constr_{\der} : \Gamma_{\der} } , \vec \alpha_{\der}  \rangle$. The formal equivalent of \autoref{lemma:one-step-branch-lemma} will be:
\begin{equation} \label{eqn:formalized-one-step-branch-lemma}
\forall X \exists Y \ ( ( \xi(X) \rightarrow ( \xi(Y) \land \chi(X,Y) ) ) \land ( \neg \xi(X) \rightarrow Y=Z_{\der}))
\end{equation}
To see that this can be proven within $\PSCA$, first note that for $X$ with $\neg \xi (X)$ we just let $Y=Z_{\der} $, so the hard case is when $ \xi (X)$ holds. The proof of this case mimics directly the meta-level proof of \autoref{lemma:one-step-branch-lemma}. This is where we need \autoref{prop:formalized-sat-respects-negation}, since we are dealing with unsatisfied formulas, and also the inductive characterization of satisfaction \autoref{theorem:formalised-satisfaction}. We also need \autoref{prop:formalized-approximant-recursion} for the $\nu^{\kappa}$-rule, \autoref{prop:formalized-knaster-tarski} for the $\mu$-rule and \autoref{prop:props-ordinal-comparison} for the reset rule.  

We will only go through a single case for illustration. Assume $\xi (X)$ and that $ \constr : \Gamma$ is the conclusion of the $\nu^{\kappa_j}$-rule. Let $t \in \nu^{\kappa_j} X_i x_i. \phi_i$ be the principal formula in the conclusion and $\phi_i( \nu^{\kappa_{j'}} X_i x_i. \phi_i, t) $ the active  formula of the premise. Then $j' \neq j$ and $\kappa_{j'}$ \emph{does not occur} in $t \in \nu^{\kappa_j} X_i x_i. \phi_i$. This means we have 

\begin{equation}\label{eqn:lukas-star}
   \sat{\der} (\rho, \code {\phi_i(x_i,X_i)} , \vec A, \alpha_0 , \hdots \alpha_{j'} , \hdots , \alpha_k) \quad  \leftrightarrow\quad  \sat{\der} (\rho, \code {\phi_i(x_i,X_i)} , \vec A, \alpha_0 , \hdots \beta , \hdots , \alpha_k) 
\end{equation}
 for any $\beta$.

Now write 
\[B(\beta) := \nu^{\beta} Y_i y_i . \sat{\der} (\rho\{ x_i \mapsto y_i \}, \code {\phi_i(x_i,X_i)} , A_0 , \hdots Y_i , \hdots A_k , \vec \alpha)  \]

Then we have the following equivalences:

\[
\def\arraystretch{1.2}
\begin{array}{cl}
   \sat{\der} (\rho, \code { s \in X_i } , A_0 , \hdots B(\beta) , \hdots A_k , \vec \alpha)\\
   \leftrightarrow & \text{Thm.}~\ref{theorem:formalised-satisfaction}\\
    \rho(s) \in \nu^{\beta} Y_i y_i . \sat{\der} (\rho\{ x_i \mapsto y_i \}, \code {\phi_i(x_i,X_i)} , A_0 , \hdots Y_i , \hdots A_k , \vec \alpha)  \\
    \leftrightarrow &\eqref{eqn:lukas-star}\\
     \rho(s) \in \nu^{\beta} Y_i y_i . \sat{\der} (\rho\{ x_i \mapsto y_i \}, \code {\phi_i(x_i,X_i)} , A_0 , \hdots Y_i , \hdots A_k , \alpha_0 , \hdots \beta , \hdots , \alpha_k )  \\
     \leftrightarrow  & \text{Thm.}~\ref{theorem:formalised-satisfaction}\\
       \sat{\der} (\rho, \code { s \in  \nu^{\kappa_{j'}} X_i x_i. \phi_i(x_i,X_i)} , \vec A , \alpha_0 , \hdots \beta , \hdots , \alpha_k) 
\end{array}
\]

By structural induction on $\phi_i$ we then get:

\begin{equation}\label{eqn:lukas-second-star}
  \def\arraystretch{1.2}
\begin{array}{c}
     \sat{\der} (\rho, \code { \phi_i( x_i  , X_i)  } , A_0 , \hdots B(\beta) , \hdots A_k , \vec \alpha)\\
     \leftrightarrow\\
     \sat{\der} (\rho, \code { \phi_i ( x_i , \nu^{\kappa_{j'}} X_i x_i. \phi_i )} , \vec A , \alpha_0 , \hdots \beta , \hdots , \alpha_k ) 
\end{array}  
\end{equation}

Finally we get:

\def\arraystretch{1.5}
\begin{tabular}{lll} 

 $(i)$ & $\neg \sat{\der} (\rho, \code{t \in \nu^{\kappa_j} X_i x_i. \phi_i(x_i, X_i) }, \vec A , \vec \alpha )$ & assumption \\
 
$(ii)$ & $\rho(t) \notin  \nu^{\alpha_j} Y_i y_i . \sat{\der} (\rho\{ x_i \mapsto y_i \}, \code {\phi_i(x_i, X_i)} , A_0 , \hdots Y_i , \hdots A_k , \vec \alpha) )$ & by Thm. \ref{theorem:formalised-satisfaction} \\

$(iii)$ & $ \exists \beta \prec \alpha_j \ 
\neg \sat{\der} (\rho\{ x_i \mapsto \rho(t) \}, \code {\phi_i(x_i,X_i)} , A_0 , \hdots B(\beta) , \hdots A_k , \vec \alpha) $ &  Prop. \ref{prop:formalized-approximant-recursion}  \\

$(iv)$ & $ \neg \sat{\der} (\rho\{ x_i \mapsto \rho(t) \}, \code {\phi_i(x_i,X_i)} , A_0 , \hdots B(\beta) , \hdots A_k , \vec \alpha) $ & instantiation\\

$(v)$ & $  \neg \sat{\der} (\rho, \code {\phi_i(t , X_i)} , A_0 , \hdots B(\beta) , \hdots A_k , \vec \alpha)$ &  induct. on $\phi_i$\\

$(vi)$ & $ \neg \sat{\der} (\rho, \code {\phi_i(t , \nu^{\kappa_j'} Y_i y_i. \phi_i)} , \vec A ,  \alpha_0 , \hdots \beta , \hdots , \alpha_k )$  &  $\eqref{eqn:lukas-second-star}$ \\

\end{tabular}

Thus, $\xi(Y)$ and $\chi(X,Y)$ for $Y= \langle \rho, \code{ \constr': \Gamma' } , \alpha_0 , \hdots \beta , \hdots , \alpha_k \rangle $ where $\constr': \Gamma'$ is the premise of $\constr: \Gamma$ in $\der$.

After proving \eqref{eqn:formalized-one-step-branch-lemma} we want to use $\Sij 1 2 $-dependent choice to get a suitable branch. Since $\xi(X)$ and $\chi(X,Y)$ are $\Dij 1 2$ we get by \autoref{thm:arithm-comb-of-delta12} that  \[( \xi(X) \rightarrow ( \xi(Y) \land \chi(X,Y) ) ) \land ( \neg \xi(X) \rightarrow Y=Z_{\der})\] is $\Dij 1 2$. Thus, we can apply $\Sij 1 2 $-dependent choice (see~\eqref{eqn:dependent-choice}) to conclude that there exists $F : \Nat \rightarrow SET$ s.t. 
\[\forall n  \ ( ( \xi(F(n)) \rightarrow ( \xi(F(n+1)) \land \chi(F(n),F(n+1)) ) ) \land ( \neg \xi(F(n)) \rightarrow F(n+1)=Z_{\der})) \]

Note that if $\neg \xi ( F(0) )$ then $F(1) = Z_{\der} $ and so $ \xi ( F(1) )$, which means by an easy induction that $\forall n \geq 1 \ \xi ( F(n) )$. By comprehension we define $B(n) := F(n+1)$ and so $B$ is a  branch witnessing falsity satisfying $\forall n \   \chi ( B(n) , B(n+1) ) $.   

Now we apply \autoref{lemma:reset-condition-provable} to get a stable reset  variable $\kappa_j$ such that, for some $m$ and all $n \geq m$, $\kappa_j$ occurs in $\constr$, where $B(n)= \langle \rho,  \code{  \constr : \Gamma} , \vec \alpha \rangle $. 

Finally, using this stable reset variable we want to show the existence of an infinitely descending ordinal sequence which would contradict \autoref{prop:props-ordinal-comparison}(4). Using comprehension we get $G: \Nat \rightarrow WO $ s.t. 
$$G: n \mapsto \alpha \ \leftrightarrow  \ B(n+m) = \langle \rho, \code{ \constr : \Gamma }, \alpha_0, \hdots \alpha_{j-1}, \alpha, \hdots \alpha_k \rangle  $$.

Since $\forall n  \  \chi ( B(n) , B(n+1) ) $ we see that $G$ contradicts $WF(\prec )$ from \autoref{prop:props-ordinal-comparison}(4). 

\end{proof}

Finally, we can put everything together to prove \autoref{thm:main-result}. 

\begin{proof}[Proof of Theorem 1]

 Let $\phi$ be arithmetical. If $\muPA \proves \phi$, then $\cmuPA \proves \phi$ by \autoref{id-to-cid}.

Conversely, suppose that $\cmuPA \proves \phi$. By \autoref{thm:translation}, there is a $\cmuPAord$ proof $\der$ with conclusion $\emptyset : \phi$. Hence, by \autoref{thm:formalisation-of-soundness},
\[
\PSCA \proves
\forall \vec \alpha \in WO \ \forall \rho,\vec A \
\sat{\der}(\rho,\code{\emptyset : \phi},\vec A,\vec\alpha).
\]
Since the constraint is empty and the sequent consists only of the arithmetical sentence $\phi$, we have
\[
\PSCA \proves \phi .
\]
by \autoref{reflection}. Finally, by M\"{o}llerfeld''s conservativity theorem, \autoref{thm:moellerfeld-psca-cons-over-muPA}, $\PSCA$ is arithmetically conservative over $\muPA$. Therefore $\muPA \proves \phi$, as required.
    
\end{proof}

\section{Conclusions and future works}

We studied the cyclic proof theory of arithmetic with generalised inductive definitions. Specifically,  we extended Simpson's equivalence result between Peano arithmetic and its cyclic formulation~\cite{Simpson17:cyc-arith}, further generalising Das and Melgaard's work on (finitely iterated) arithmetical inductive definitions~\cite{Das21:CT-fscd}. Along the way, we also established an equivalence between two alternative validity conditions for cyclic proofs in our setting: the commonly adopted condition based on the notion of progressing trace and so-called reset condition for annotated proof systems (e.g.,~\cite{sprengerdam03:journal,Graham-cyclic-first-order-mu-calculus}). 

Our results easily apply to the intuitionistic context, essentially using negative translations studied in~\cite{Tupailo04doubleneg,Curzi023}. Because of the absence of classical logic dualities, we can identify weaker versions of  generalised inductive definitions, such as \emph{strictly positive} ones (defined by fixed points where $\mu$ does not bind set variables to the left of an implication). Such subsystems are of great importance for implementation perspectives, as proof assistants (such as Coq or Agda) typically enforce strict positivity conditions on (co)inductive types for consistency reasons. Despite being apparently more restrictive,  strictly positive fixed points are commonly believed to have the same proof theoretic strength of the (hereditarily) positive ones (see~\cite{feferman1981iterated}). These considerations deserve further investigations.

\subsection*{Acknowledgements} We would like to thank Anupam Das for his continuous support, and Graham Leigh for his valuable suggestions during the development of this paper. 

This work was supported by a UKRI Future Leaders Fellowship, ``Structure vs Invariants
in Proofs'' (project reference MR/S035540/1),  by the Wallenberg Academy Fellowship Prolongation project ``Taming
Jörmungandr: The Logical Foundations of Circularity'' (project reference 251080003), and by the VR starting grant ``Proofs with Cycles in Computation'' (project reference 251088801).

\bibliographystyle{alphaurl}
\bibliography{main}

\newpage
\appendix

\section{Some results of $\muPA$}\label{app:derivations}

\begin{prop}\label{prop:axiomatic-implies-sequent}
 The axioms    ($\preaxiom$-$\mu$), ($\indaxiom$-$\mu$), ($\preaxiom$-$\nu$), and ($\indaxiom$-$\nu$) are derivable in the sequent calculus presentation of $\muPA$.
\end{prop}
\begin{proof}
    By the dualities for $\nu$ and $\mu$, it suffices to prove the last two axioms:
\begin{itemize}
    \item ($\preaxiom$-$\nu$):
    \[
\vlderivation{
\vlin{\forall}{}{\forall y(y \in \nu Xx\phi \to \phi(\nu Xx\phi, y))}
   {
   \vlin{\vee}{}{y \in \nu Xx\phi \to \phi(\nu Xx\phi, y)}
     {
     \vlid{=}{}{\neg y \in \nu Xx\phi,  \phi(\nu Xx\phi, y)}
         {
         \vlin{\mu}{}{y \in \mu Xx\neg \phi(\neg X),  \phi(\nu Xx\phi, y)}
          {
          \vlid{=}{}{\neg \phi(\neg \mu Xx\neg \phi(\neg X), y),   \phi(\nu Xx\phi, y)}{\vlin{\id}{}{\neg \phi (\nu Xx\phi),  \phi(\nu Xx\phi, y)}{\vlhy{}}}
          }
         }
     }
   }
}
    \]
    \item ($\indaxiom$-$\nu$):
    \[
\vlderivation{
\vlin{\vee}{}{\forall y(\psi(y)\to \phi(\psi, y))\to \forall y(\psi(y)\to y \in \nu Xx\phi)}
  {
 \vlin{\forall}{}{\exists y(\psi(y)\wedge \neg (\phi(\psi, y))),  \forall y(\psi(y)\to y \in \nu Xx\phi)}
    {
    \vlin{\vee}{}{\exists y(\psi(y)\wedge \neg (\phi(\psi, y))),  \psi(y)\to y \in \nu Xx\phi}
       {
       \vliin{\xind{\phi}}{}{\Gamma, y \in \nu Xx\phi}{
       \vlin{\id}{}{\Gamma, \psi(y)}{\vlhy{}}
       }
       {
       \vlin{\exists}{}{\Gamma, \neg \psi (z), \phi(\psi, z)}
         {
             \vliin{\wedge}{}{\psi(z)\wedge \neg \phi(\psi, z), \neg \psi(y), \neg \psi(z), \phi(\psi, y)}
               {
               \vlin{\id}{}{\psi(z), \neg \psi(y), \neg \psi(z), \phi(\psi, y)}{\vlhy{}}
               }{
               \vlin{\id}{}{ \neg \phi(\psi, z), \neg \psi(y), \neg \psi(z), \phi(\psi, y)}{\vlhy{}}
               }
             }
         }
       }
    }
  }
}
    \]
    where $\Gamma=\exists y(\psi(y)\wedge \neg \phi(\psi, y)), \neg  \psi(y)$.
\end{itemize}
\end{proof}

 \begin{prop}\label{prop:funct-ind}
 If $\phi$ is a formula and $Y$ is a free set variable occurring positively in $\phi$ then $\neg \phi(\neg Y), \phi(Y)$ is derivable in $\muPA$ from $ z \not\in Y, z \in Y$.
 \end{prop}  
\begin{proof}
    The proof is by induction on the structure of $\phi$. We only consider the most relevant cases:
    \begin{itemize}
        \item If $\phi(Y) = t \in Y$  for some $t$:
        \[
\vlderivation{
\vlinf{[t/y]}{}{ t \not\in Y, t \in Y}{ z \not \in Y, z \in Y}
}
        \]
        \item If $\phi(Y)=t \in \nu Xx.\phi(X, Y)$ then by the induction hypothesis we have:
\[
\vlderivation{
\vliq{IH}{}{\neg \phi(X, y), \phi(X, y)}{\vlhy{z \not \in Y, z \in Y}}
}
\]
        
        \[
\vlderivation{
\vliin{\xind{\phi}}{}{ t \in \nu Xx \phi,   t \not \in \nu Xx \phi}{\vlin{\id}{}{  t\not \in \nu Xx \phi,   t \in \nu Xx \phi}{\vlhy{}}}
{
\vlid{=}{}{ y\not \in \nu Xx \phi, \phi(\nu Xx\phi, y)}
 {
 \vlin{\mu}{}{ y \in \mu Xx \neg \phi(\neg X), \phi(\nu Xx\phi, y)}
 {
 \vlid{=}{}{\neg \phi(\neg y \in \mu Xx \neg \phi(\neg X)), \phi(\nu Xx\phi, y)}{\vliq{IH}{}{\neg \phi(\nu Xx\phi, y), \phi(\nu Xx\phi, y)}{\vlhy{z \not \in Y, z \in Y}}
 }
 }
}
}
}
        \]
    \end{itemize}
\end{proof}



\section{Some results of $\cmuPA$}\label{app:cmupa}

\begin{prop}\label{prop:funct-cycl}
 If $\phi$ is a formula and $Y$ is a free set variable occurring positively in $\phi$ then $\neg \phi(\neg Y), \phi(Y)$ is derivable in $\cmuPA$ from $ z \not \in Y, z \in Y$.
 \end{prop}  
\begin{proof}
    The statement is proven by induction on the structure of $\phi$ as for~\autoref{prop:funct-ind}. The only critical case is when $\phi(Y)=t \in \nu Xx. \phi(X, Y)$:
\[
\vlderivation{
\vlin{\nu}{\bullet}{  t\not \in \nu Xx. \phi(X, Y),t  \in \nu Xx. \phi(X, Y)}
  {
  \vlid{=}{}{   t\not  \in \nu Xx. \phi(X, Y), \phi((\nu Xx. \phi(X, Y)),  Y, t)}{
  \vlin{\mu}{}{  t  \in \mu Xx. \neg \phi(\neg X, Y), \phi((\nu Xx. \phi(X, Y))}
   {
   \vliiq{IH}{}{  \neg \phi((\nu Xx \phi(X, Y)), Y, t), \phi((\nu Xx. \phi(X, Y))}
     {
     \vlin{}{\bullet}{  t\not  \in \nu Xx. \phi(X, Y), t  \in \nu Xx. \phi(X, Y)}{\vlhy{}}
     }{
     \vlhy{t \not \in Y, t \in Y}
     }
   }
  }
  }
}
\]
it is easy to check that the coderivation above is a (cyclic) progressing proof.
\end{proof}

\begin{prop}\label{prop:gen-id}
The identity rule can be derived in $\cmuPA$ from the following 
    \[
\vlinf{\id}{\phi \text{ atomic}}{\Gamma, \neg \phi, \phi}{}
    \]
\end{prop}
\begin{proof}
This is a straightforward induction on the structure of $\phi$. We only consider the case where $\phi= t \in \mu Xx \phi$:
\[
\vlderivation{
\vlid{=}{\bullet}{ t \not \in \mu Xx \phi, t \in \mu Xx \phi}
 {
 \vlin{\mu}{}{ t \in \nu Xx \neg \phi(\neg X), t \in \mu Xx \phi}
   {
   \vlin{\nu}{}{  t \in \nu Xx \neg \phi(\neg X), \phi(\mu Xx\phi, t)}
      {
      \vlid{=}{}{\neg \phi(\neg \nu Xx \neg \phi(\neg X), t), \phi(\mu Xx\phi, t)}
         {
         \vliq{IH}{}{\neg \phi(\mu Xx\phi, t), \phi(\mu Xx\phi, t)}{
         \vlin{=\text{-}\mathsf{sub}}{}{  t'\not  \in \mu Xx\phi, t' \in \mu Xx\phi}{
         \vlin{}{\bullet}{  t \not\in \mu Xx\phi, t \in \mu Xx\phi}{\vlhy{}}
         }
         }
         }
      }
   }
 }
}
\]
it is easy to check that the coderivation above is a (cyclic) progressing proof.
\end{proof}

\end{document}